\documentclass[conference,letterpaper]{IEEEtran}

\newif\ifarxiv
\arxivtrue 

\usepackage[utf8]{inputenc} 
\usepackage[T1]{fontenc}
\usepackage{url}
\usepackage{ifthen}
\usepackage{cite}
\usepackage{graphicx} 
\usepackage{subfigure}
\usepackage{stfloats}
\usepackage[cmex10]{amsmath} 

\usepackage{amssymb,amsfonts} 
\usepackage{mathtools}
\usepackage[normalem]{ulem} 
\usepackage{algorithm}
\usepackage[noend]{algpseudocode}
\usepackage{enumitem}
\usepackage{xcolor}
\usepackage{nicefrac}

\IEEEoverridecommandlockouts

\newtheorem{lemma}{Lemma}
\newtheorem{theorem}{Theorem}
\newtheorem{proposition}{Proposition}

\newtheorem{remark}{Remark}
\newtheorem{corollary}{Corollary}
\newtheorem{proof}{Proof}

\begin{document}
\title{Best-Arm Identification with Noisy Actuation} 


\author{\IEEEauthorblockN{Merve Karakas$^\dagger$, Osama Hanna$^\ddagger$, Lin F. Yang$^\dagger$, and Christina Fragouli$^\dagger$\\ 
$^\dagger$University of California, Los Angeles, $^\ddagger$ Meta, Central Applied Science\\
Email: mervekarakas@ucla.edu, ohanna@meta.com, \{linyang, christina.fragouli\}@ucla.edu}\thanks{This work is partially supported by NSF grants \#2221871, \#2007714, and \#2221871, by Army Research Laboratory grant under Cooperative Agreement W911NF-17-2-0196, and by Amazon Faculty Award.}
}

\maketitle


\begin{abstract}
In this paper, we consider a multi-armed bandit (MAB) instance and study how to identify the best arm when arm commands are conveyed from a central learner to a distributed agent over a discrete memoryless channel (DMC). 
Depending on the agent capabilities, we provide communication schemes along with their analysis, which interestingly relate to the zero-error capacity of the underlying DMC.
\end{abstract}

\section{Introduction}
\label{sec:intro}
Motivated by the growing interest in distributed learning applications, recent work has begun to investigate learning over noisy communication channels \cite{singleagentprior,hanna2024multi, karakas2025feedback, karakas2026jsait, bai-arm-erasures}. Within this landscape, we consider a new formulation: best arm identification in which arm commands are conveyed from a central learner to a distributed agent over a discrete memoryless channel (DMC).

Such channels can model situations in which actions are communicated over unreliable interfaces, including physical controls, low-bandwidth links, or human-mediated instructions, where commands may be received correctly or confused with a small set of alternatives. Despite their simplicity, DMCs capture key sources of noise in practical action-communication pipelines and allow us to derive theoretical insights.

{Our focus is on identifying when performance guarantees can be made independent of the channel error probabilities and depend only on confusability. For that question, zero-error capacity is the natural threshold quantity.}

In this paper, we study three models of increasing capability for the distributed agent. In the first model, the agent can only execute the commands it receives. In the second, the agent is equipped with a preloaded codebook that allows it to decode the received signals into actions. In the third, most powerful model, the agent {can maintain state and execute multi-round plans installed via zero-error packets}.
For each model, we compare performance against an idealized benchmark in which communication occurs over an error-free channel. Our main findings are summarized below:

\begin{itemize}[leftmargin=*]
\item \textbf{Case 1 (No decoding).} If the agent simply executes received commands, the channel induces mixing of the action distributions. Performance degrades by a factor governed by the smallest singular value of the channel matrix, which depends on the channel error probability.
\item \textbf{Case 2 (Fixed decoding).} If the zero-error capacity of the channel is nonzero and the agent can use a fixed block code, then the resulting performance loss can be only a constant multiplicative factor, independent of the channel error probabilities.
\item \textbf{Case 3 ({Stateful execution}).} If the agent {can maintain state and execute multi-round plans installed via zero-error packets,} then the performance gap relative to the error-free benchmark can be reduced to an additive overhead.
\end{itemize}
We note that when the zero-error capacity of the underlying DMC is zero, no coding strategy can eliminate dependence on the channel noise (Remark~\ref{rem:C0-zero}), and performance necessarily degrades with the channel error probability. 

\smallskip
\noindent \textbf{Related Work.} Multi-armed bandits (MAB) are a standard model for sequential decision-making under uncertainty; see, e.g.,
\cite{lattimore2020bandit} and references therein. Our objective is fixed-confidence best-arm identification (BAI),
for which classic elimination and tracking-style algorithms yield instance-dependent sample complexities of order
$\tilde O\!\big(\sum_{a\neq a^\star}\Delta_a^{-2}\log(1/\delta)\big)$; representative references include
successive elimination \cite{even2006action}, lil'UCB \cite{jamesnowak2014}, and Track-and-Stop
\cite{garivier2016optimal} as well as complexity/lower-bound characterizations \cite{kaufmann2016complexity}.

Prior work on noisy bandits focuses on the \emph{reward} channel---adversarial corruption \cite{Lykouris18, pmlr-v99-gupta19a, amir2020prediction} and delayed/censored feedback \cite{Kapoor19ProbCorReward}---but assumes the chosen arm is executed as intended. Recent work has studied \emph{uncertainty in the executed arm} via arm-erasure models, for regret \cite{singleagentprior,hanna2024multi, karakas2025feedback, karakas2026jsait} and BAI \cite{bai-arm-erasures}. In contrast, we consider general discrete memoryless actuation channels with \emph{confusability} (typewriter channels as a running example), leveraging zero-error communication tools \cite{Shannon1956,lovasz1979shannon,KornerOrlitsky1998ZeroError}.

\smallskip
\noindent \textbf{Paper Organization.} Section~\ref{sec:problem} provides our model, objectives, and background on zero-error capacity; Sections~\ref{sec:baseline}, \ref{sec:typewriter-anchors}, and~\ref{sec:PSE} discuss cases 1--3 respectively.\ifarxiv\else\ Some proof details are omitted for space; see~\cite{driveversion}.\fi

\section{Model and Objectives}
\label{sec:problem}

\subsection{Bandit instance and objective (BAI)}
We consider a stochastic $K$-armed bandit with arms indexed by $\mathcal{K} \coloneq \{0,1,\dots,K-1\}$.
Pulling arm $a$ produces a reward $r\sim\nu_a$ with mean $\mu_a\triangleq\mathbb{E}[r]$.
Let $a^\star\in\arg\max_{a\in \mathcal{K}}\mu_a$ denote a best arm (assumed unique for simplicity), and define gaps
$\Delta_a\triangleq \mu_{a^\star}-\mu_a$ for $a\neq a^\star$.

Our goal is \emph{fixed-confidence best-arm identification (BAI)}:
an algorithm adaptively interacts with the environment, stops at a (random) time $\tau$, and outputs $\hat a$.
It is $\delta$-correct if
\[
\Pr(\hat a = a^\star)\ge 1-\delta.
\]
Hence, we measure performance primarily by the total number of physical rounds (pulls).
In our interface, each round also uses the command channel once, so the pull count
coincides with the number of channel uses.

In the noiseless setting, instance‑dependent lower and upper bounds show that fixed‑confidence BAI can be solved with $ \Theta \left( \sum_{a \neq a^\star} \Delta^{-2}_{a} \log{(1/\delta)}\right)$ samples (see, e.g., \cite{jamesnowak2014, even2006action, garivier2016optimal}). We will treat this benchmark as $N_{\text{clean}}(\delta,\mu)$.


\subsection{Actuation Channel and Arm Mismatch}
\label{sec:channel-model}

At each time slot $t$, the learner produces a channel input $X_t\in\mathcal{X}$ which is passed through a discrete
memoryless channel (DMC) $W:\mathcal{X}\to\mathcal{Y}$, yielding an output $Y_t\in\mathcal{Y}$ at the agent according to
\[
W(y\mid x)\;\triangleq\;\Pr(Y_t=y\mid X_t=x).
\]
We consider an \emph{arm channel} with $\mathcal X=\mathcal Y=\mathcal K$, so $X_t,Y_t\in\mathcal K$ and
block codes correspond to sequences $X^n,Y^n\in\mathcal K^n$.
The agent then executes an arm $\tilde{a}_t\in \mathcal{K}$ (as a function of its received symbols and actuation rule), and a reward
$r_t$ is generated with mean $\mu_{\tilde{a}_t}$. The learner observes $r_t$ and its own transmissions $X_t$, but does not observe channel outputs 
$Y_t$.

\subsection{Zero-error capacity and confusability graphs}
\label{sec:zero-error-prelim}
We briefly introduce notation and review zero-error communication over a DMC.
Zero-error rates depend only on which outputs are \emph{possible} for each input (the support of $W$),
not on their probabilities \cite{Shannon1956}. Define the output support sets
\[
\mathcal{Y}(x)\triangleq \{y\in\mathcal{Y}: W(y\mid x)>0\},\qquad x\in\mathcal{X}.
\]
\paragraph{Confusability graph}
The confusability graph of $W$ is the undirected graph $G=(V,E)$ with $V=\mathcal{X}$ and
\[
\{x,x'\}\in E
\quad\Longleftrightarrow\quad
\mathcal{Y}(x)\cap \mathcal{Y}(x')\neq \emptyset.
\]
\paragraph{Blocklength-$n$ packets}
A length-$n$ packet is an input sequence $x^n\in V^n$.
Confusability of length-$n$ sequences is captured by the $n$-fold strong graph power $G^{\boxtimes n}$:
distinct $x^n,(x')^n\in V^n$ are adjacent in $G^{\boxtimes n}$ iff for all coordinates $i$,
$x_i=x'_i$ or $\{x_i,x'_i\}\in E$ \cite{Shannon1956, KornerOrlitsky1998ZeroError}.
\paragraph{Zero-error message count}
Let $\alpha(\cdot)$ be the independence number. The maximum number of messages conveyable with zero decoding error
using exactly $n$ channel uses is
\begin{equation}
\label{eq:Mn-alpha}
M(n)\;\triangleq\;\alpha\!\big(G^{\boxtimes n}\big),
\end{equation}
via the standard equivalence between zero-error codebooks and independent sets \cite{Shannon1956}.
\paragraph{Minimal blocklength for a message set}
For a finite message set $\mathcal{S}$ with $|\mathcal{S}|<\infty$, define
\begin{equation}
\label{eq:nstar-def}
n^\star(\mathcal{S}) \;\triangleq\;
\min\{n\in\mathbb{N}:\alpha(G^{\boxtimes n})\ge |\mathcal{S}|\}.
\end{equation}
\paragraph{Zero-error capacity}
Shannon's zero-error capacity is
\begin{equation}
\label{eq:C0-def}
C_0(G)\;\triangleq\;\lim_{n\to\infty}\frac{1}{n}\log_2 \alpha\!\big(G^{\boxtimes n}\big)
\;=\;\log_2 \Theta(G),
\end{equation}
where $\Theta(G)$ is the Shannon capacity \cite{Shannon1956}.
When $C_0(G)>0$, $M(n)$ grows exponentially in $n$, so any fixed finite message set is achievable
with some finite blocklength. 

\begin{remark}
\label{rem:C0-zero}
The following are equivalent (see \cite{Shannon1956}):
\[
C_0(G)=0
\;\Longleftrightarrow\;
\alpha(G^{\boxtimes n})=1\ \forall n
\;\Longleftrightarrow\;
G \text{ is complete}.
\]
Here, \(M(n)=1 ~\forall ~n\), so no fixed-blocklength, no-feedback protocol can convey even a
binary command with zero-error. 
\end{remark}

\subsection{Typewriter channel and general graphs}
\label{sec:typewriter}
A central example is the one-sided typewriter channel (see Fig.~\ref{fig:typewriter-cycle}) over alphabet $\mathcal{X}=\{0,1,\dots,K-1\}$:
\[
Y=\begin{cases}
X, & \text{w.p. }1-\varepsilon,\\
X+1\!\!\!\!\pmod K, & \text{w.p. }\varepsilon.
\end{cases}
\]
More generally, our coding layer is described by the channel's \emph{confusability graph} $G$,
which allows us to state results for arbitrary discrete actuation links beyond typewriter channels.
For any $\varepsilon\in(0,1)$, the one-sided typewriter channel has confusability graph
$G=C_K$, the undirected cycle on $K$ vertices with edges $\{i,i+1\}$ (indices modulo $K$); throughout the paper we use $C_5$ and $C_6$ as running examples.


\begin{figure}[t b!]
  \centering
  \includegraphics[width=0.8\linewidth]{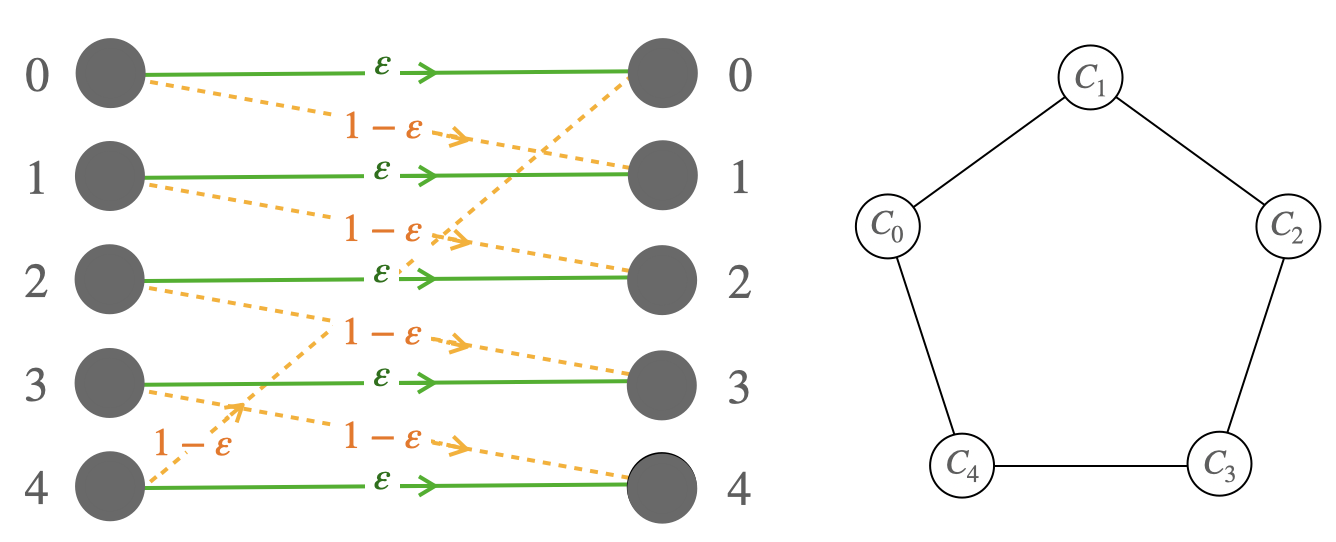}\
  \caption{Example of a one-sided typewriter channel over alphabet $\mathcal{X}=\{0,\dots,4\}$ (left) and its confusability graph $C_5$ (right)}.
  \vspace{-.2in}
  \label{fig:typewriter-cycle}
\end{figure}

\section{Case 1: No decoding and $\varepsilon$-dependent inflation}
\label{sec:baseline}
We begin with the vanilla single-shot method: in each round $t$, the learner selects an intended arm
$a_t\in \mathcal{K}$ and transmits it once over the channel. The agent then executes the received symbol
$\tilde{a}_t$ as the pulled arm immediately, generating a reward $r_t$ with mean $\mu_{\tilde{a}_t}$. The learner observes $(a_t,r_t)$
but not $\tilde{a}_t$.

Let $\mu=(\mu_0,\dots,\mu_{K-1})^\top$ be the (unknown) physical arm-mean vector. Define the
\emph{command-conditioned} mean vector ${\tilde\mu}=({\tilde\mu_0},\dots,{\tilde\mu_{K-1}})^\top$ by $
{\tilde\mu_i} \triangleq \mathbb{E}[r_t\mid a_t=i],\quad i \in \mathcal{K}.
$

For a general DMC with transition matrix $W$, the single-shot method yields
\begin{equation}
\label{eq:baseline-nu-Wmu}
{\tilde\mu_i} \;=\; \sum_{y=0}^{K-1} W(y\mid i)\,\mu_y,
\qquad\text{i.e.}\qquad
{\tilde\mu} = W\mu.
\end{equation}
Thus, in the baseline, the learner observes rewards from a \emph{mixed instance} ${\tilde\mu}$, not directly from $\mu$.

\paragraph{Typewriter specialization.}
Under the typewriter law from Section~\ref{sec:typewriter},
we have
\begin{equation}
\label{eq:baseline-typewriter-mixing}
{\tilde\mu_i} \;=\; (1-\varepsilon)\mu_i + \varepsilon\,\mu_{i+1}, \qquad i \!\!\!\!\pmod K.
\end{equation}
i.e., $W=(1-\varepsilon)I+\varepsilon S$, where $S$ is the cyclic shift $(S\mu)_i=\mu_{i+1}$.

\subsubsection{Identifiability: when the best arm cannot be recovered}
\label{sec:baseline-identifiability}

The BAI goal is to identify $a^\star\in\arg\max_i \mu_i$.
This is impossible if the mixing map $\mu\mapsto {\tilde\mu}=W\mu$ is not injective.

\begin{lemma}[Non-identifiability under non-injective mixing]
\label{lem:baseline-nonident}
If there exist $\mu\neq \mu'$ such that $W\mu=W\mu'$ but $\arg\max_i \mu_i\neq \arg\max_i \mu'_i$, then no
algorithm can be $\delta$-correct for all instances (for any $\delta<1$).
\end{lemma}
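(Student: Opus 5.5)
The plan is an indistinguishability argument. Fix $\mu\neq\mu'$ with $W\mu=W\mu'$ and distinct argmax, say $a^\star(\mu)=i$ and $a^\star(\mu')=j\neq i$. The first step is to build two full bandit instances whose reward laws match these means. The simplest choice is Gaussian rewards $\nu_a=\mathcal N(\mu_a,1)$ and $\nu'_a=\mathcal N(\mu'_a,1)$. Equality of the means $W\mu=W\mu'$ alone does not make the command-conditioned reward \emph{distributions} equal: under command $k$ the observed reward is the mixture $\sum_y W(y\mid k)\nu_y$, and two Gaussian mixtures with equal mean need not coincide.

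I therefore expect this to be the main obstacle: choosing reward families so that equal mixed means force equal mixed laws. The first thing I will try is Bernoulli rewards, $\nu_a=\mathrm{Bern}(\mu_a)$. First I rescale the pair: replace $\mu,\mu'$ by $c\mu+b\mathbf 1$ and $c\mu'+b\mathbf 1$ with $c>0$ small, so that all entries lie in $(0,1)$. This preserves both argmaxes and the relation $W\mu=W\mu'$, since $W$ is row-stochastic and hence $W\mathbf 1=\mathbf 1$. A mixture of Bernoullis is Bernoulli with the mixed mean, so command $k$ produces reward $\mathrm{Bern}(\tilde\mu_k)$ under both instances. Channel outputs and the agent's randomness are independent of the arm means. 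Hence the per-round conditional law of $r_t$ given the whole history and $X_t$ is identical under $\mu$ and $\mu'$.

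Next, by induction over rounds, the joint law of the learner's transcript $(X_1,r_1,\dots,X_t,r_t)$ is the same under both instances, including the learner's internal randomization. This uses that the learner observes only $X_t$ and $r_t$. The stopping time $\tau$ and the output $\hat a$ are measurable functions of this transcript plus internal randomness, so $\hat a$ has the same distribution under $\mu$ and $\mu'$.

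To conclude, $\delta$-correctness on both instances requires $\Pr(\hat a=i)\ge 1-\delta$ and $\Pr(\hat a=j)\ge1-\delta$. These events are disjoint, so $2(1-\delta)\le1$, i.e.\ $\delta\ge 1/2$. This already rules out every $\delta<1/2$. To reach every $\delta<1$, I will take several instances that are pairwise indistinguishable with distinct best arms. Because $W$ is singular, I can add to a base $\mu$ small multiples of a kernel vector $v$ (nonzero, since $\mu-\mu'\in\ker W$) together with the constant vector. I then show that $m$ such instances can be chosen with distinct argmaxes, which gives $\delta\ge 1-1/m$.

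If only a single pair is available, I will state the conclusion in the standard form, that no algorithm is $\delta$-correct for $\delta<1/2$, and note that the paper's ``any $\delta<1$'' is meant loosely or for the case where the kernel supports $m$ instances with distinct argmaxes. I will also remark that the paper's typewriter example with $W$ singular (e.g.\ $\varepsilon=1/2$, $K$ even) falls into this regime.
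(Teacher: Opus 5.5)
Your main argument is the paper's indistinguishability argument, done more carefully. The paper's proof simply asserts that $W\mu=W\mu'$ makes the conditional reward law given $X_t=i$ identical under both instances. That holds only for reward families in which a mixture is determined by its mean. For unit-variance Gaussian arms, for example, the command-$i$ law is a Gaussian mixture, which in general differs between the two instances. Your rescaling to $c\mu+b\mathbf 1$ (valid because $W\mathbf 1=\mathbf 1$), followed by Bernoulli arms, is exactly what makes that step true. The resulting bound $2(1-\delta)\le 1$ correctly rules out all $\delta<1/2$. The paper's concluding phrase ``cannot be correct on both'' likewise yields only $\delta<1/2$ once the output may be randomized.

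The step that fails is the extension to every $\delta<1$. With $m$ instances in a single fiber $\mu+\ker W$ having distinct best arms, you get $m(1-\delta)\le 1$. But $m\le K$, so this never reaches beyond $\delta<1-1/K$. No argument can, because the statement as written is false on $[1-1/K,1)$. The algorithm that stops immediately and outputs a uniformly random arm has $\Pr(\hat a=a^\star)=1/K\ge 1-\delta$ on every instance, and it is admissible in your own framework, which allows internal randomization.

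You also cannot show in general that many distinct best arms occur in one fiber. Take the paper's own singular example: the typewriter with $K$ even and $\varepsilon=1/2$. There $\ker W$ is spanned by $(1,-1,1,-1,\dots)$, so along any fiber only the best even arm and the best odd arm can ever be optimal, and $m=2$.

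Your fallback should therefore not describe the ``any $\delta<1$'' claim as meant loosely, or as holding when the kernel supports many instances. The correct conclusion is impossibility for $\delta<1-1/m$, where $m$ is the largest number of distinct best arms within a single fiber; in particular, impossibility for all $\delta<1/2$. State explicitly that the lemma fails for $\delta\ge 1-1/K$.
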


\subsubsection{Baseline inflation via conditioning of the mixing map}
\label{sec:baseline-conditioning}

Under single-shot mixing \eqref{eq:baseline-nu-Wmu}, the learner observes the mixed means
${\tilde\mu}=W\mu$ rather than $\mu$. When $W$ is invertible, a natural baseline is to estimate ${\tilde\mu}$
from samples and then \emph{unmix} via $\hat\mu = W^{-1}{\hat{\tilde\mu}}$. This unmixing step amplifies
estimation error according to the conditioning of $W$.

\begin{proposition}
\label{prop:baseline-conditioning}
Assume $W$ is known and invertible. Let $\sigma_{\min}(W)$ denote the smallest singular value of $W$.
Then, relative to the noiseless setting, the single-shot + unmixing baseline incurs an
inflation governed by
\begin{equation}
\begin{aligned}
\label{eq:baseline-inflation}
T_{\text{baseline}}(\delta,\mu)
&\;=\;
\tilde{O}\!\Big(\frac{1}{\sigma_{\min}(W)^2}\,N_{\text{clean}}(\delta,\mu)\Big), \\
\end{aligned}
\end{equation}
where $\tilde{O}(\cdot)$ hides universal constants and logarithmic factors and \begin{equation}
\label{eq:Nclean-recall}
N_{\text{clean}}(\delta,\mu)
\;=\;
\Theta\!\Big(\sum_{a\neq a^\star} \Delta_a^{-2}\log(1/\delta)\Big).
\end{equation}
\end{proposition}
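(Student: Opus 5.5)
We will analyze a concrete algorithm: uniform (round-robin) sampling of commands combined with elimination, carried out on the unmixed estimates. In each phase every command $i\in\mathcal K$ that is still relevant is transmitted once. The learner forms the empirical command-conditioned means $\hat{\tilde\mu}_i$ and computes $\hat\mu=W^{-1}\hat{\tilde\mu}$. Because the channel mixes arms, eliminating a physical arm does not let us stop sending its command. We therefore keep sampling all $K$ commands, and only the set of candidate arms shrinks. The guarantee to prove is this: once every command has received $m$ samples, $\|\hat\mu-\mu\|_\infty$ is small with high probability, and arms are eliminated once their confidence intervals separate.

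The key estimate comes in three steps. First, assume rewards are $1$-subgaussian. For a fixed command $i$, each reward equals $\mu_{\tilde a_t}$ plus noise. The executed arm $\tilde a_t$ is i.i.d.\ given $a_t=i$ (the channel is memoryless), so $r_t$ given $a_t=i$ is i.i.d.\ with mean $\tilde\mu_i$ by \eqref{eq:baseline-nu-Wmu}, and it is subgaussian with constant $O(1)$ (bounded means plus noise). Second, Hoeffding's inequality gives $\|\hat{\tilde\mu}-\tilde\mu\|_2\le \sqrt{K}\,c\sqrt{\log(K m^2/\delta)/m}$ simultaneously for all phases, with probability at least $1-\delta$, via a union bound over $i$ and $m$. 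Third, write $\hat\mu-\mu=W^{-1}(\hat{\tilde\mu}-\tilde\mu)$. Then
\[
\|\hat\mu-\mu\|_\infty\le\|\hat\mu-\mu\|_2\le \frac{\|\hat{\tilde\mu}-\tilde\mu\|_2}{\sigma_{\min}(W)} .
\]
This yields confidence radius $\beta_m=\frac{c\sqrt{K\log(Km^2/\delta)/m}}{\sigma_{\min}(W)}$. On the good event, arm $a$ is eliminated once $\beta_m\le\Delta_a/4$, that is, after $m_a=\tilde O\big(K\sigma_{\min}(W)^{-2}\Delta_a^{-2}\log(1/\delta)\big)$ samples per command. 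Correctness holds because $a^\star$ is never eliminated on the good event.

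A sharper variant avoids the $\sqrt K$ factor from the $\ell_2$ norm. Use the row-wise bound $|(\hat\mu-\mu)_a|\le\|e_a^\top W^{-1}\|_2\,\max$-type subgaussian concentration. Each coordinate $e_a^\top W^{-1}\hat{\tilde\mu}$ is an average of independent subgaussian terms with variance proxy $\sum_i (W^{-1})_{ai}^2/m\le 1/(\sigma_{\min}^2 m)$. This gives $\beta_m=c\sqrt{\log(Km^2/\delta)/m}/\sigma_{\min}(W)$ with no $\sqrt K$ loss.

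The total pull count is governed by the time at which the last arm is eliminated. When sampling is done only on commands whose rows of $W^{-1}$ matter, the count matches $\sum_{a}\sigma_{\min}^{-2}\Delta_a^{-2}\log(1/\delta)$ up to logs. The main obstacle is this accounting. Under plain round-robin, the cost is $K\max_a m_a$, which is $K\Delta_{\min}^{-2}/\sigma_{\min}^2$ rather than $\sum_a\Delta_a^{-2}/\sigma_{\min}^2$. These agree up to a factor $K$ that the $\tilde O$ can absorb only if $K$ is treated as a constant. Otherwise the sum needs a more careful allocation, such as an adaptive design on commands. I would first present the round-robin bound $\tilde O(K\Delta_{\min}^{-2}\sigma_{\min}^{-2}\log(1/\delta))$, note that it matches \eqref{eq:baseline-inflation} up to the factor relating $K\Delta_{\min}^{-2}$ to $\sum_a\Delta_a^{-2}$, and state explicitly that \eqref{eq:baseline-inflation} is interpreted with $K$-dependent factors hidden in the $\tilde O$. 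For the typewriter case one can finally note that $\sigma_{\min}(W)=\min_k|1-\varepsilon+\varepsilon e^{2\pi i k/K}|$, since $W$ is circulant. This is what makes the inflation $\varepsilon$-dependent.
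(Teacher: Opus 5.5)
Your core argument is the paper's: unmix via $\hat\mu=W^{-1}\hat{\tilde\mu}$, bound $\|\hat\mu-\mu\|_2\le\|\hat{\tilde\mu}-\tilde\mu\|_2/\sigma_{\min}(W)$, and turn the $t^{-1/2}$ rate into a $\sigma_{\min}(W)^{-2}$ inflation of the sample size. The paper's proof is only this scaling heuristic. Your additions are correct: the round-robin elimination scheme, the uniform concentration bound, and the row-wise refinement $\|e_a^\top W^{-1}\|_2\le\|W^{-1}\|_2=1/\sigma_{\min}(W)$. The refinement removes the $\sqrt K$ that the paper's $\ell_2$ step would cost in the confidence radius. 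Together these rigorously give $\tilde O\big(K\sigma_{\min}(W)^{-2}\Delta_{\min}^{-2}\log(1/\delta)\big)$ pulls.

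The gap is the one you flag yourself: this is not the stated bound. The statement's $\tilde O$ hides only universal constants and logarithms, and $K\Delta_{\min}^{-2}$ can exceed $\sum_{a\neq a^\star}\Delta_a^{-2}$ by a factor close to $K$ (one small gap, all others large). Your repair, sampling only the commands whose rows of $W^{-1}$ matter, is asserted without argument and does not work in general. Suppose command $i$ gets $n_i$ pulls, and set $v_a\triangleq W^{-\top}(e_{a^\star}-e_a)$. The unmixed contrast $(e_{a^\star}-e_a)^\top\hat\mu=v_a^\top\hat{\tilde\mu}$ then has variance of order $\sum_i v_{a,i}^2/n_i$. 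By Cauchy--Schwarz, separating $a$ from $a^\star$ with the unmixing estimator needs at least of order $\|v_a\|_1^2\,\Delta_a^{-2}\log(1/\delta)$ pulls in total, however they are allocated. Now take $e_{a^\star}-e_a$ to be a right singular vector of $W$ for $\sigma_{\min}(W)$ whose left singular vector is spread evenly over all $K$ commands (such row-stochastic $W$ exist). Then $v_a$ is flat and $\|v_a\|_1^2=2K/\sigma_{\min}(W)^2$. The required count is of order $2K\sigma_{\min}(W)^{-2}\Delta_a^{-2}\log(1/\delta)$, which is order $K$ times $\sigma_{\min}(W)^{-2}\sum_{a'\neq a^\star}\Delta_{a'}^{-2}\log(1/\delta)$ when $a$ carries the only small gap.

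So the factor $K$ is not an artifact of round-robin. The paper's sketch has the same hole: it never says how pulls are spread over commands, and its $\ell_2$ step implicitly needs every command estimated to comparable accuracy. The honest conclusion is your round-robin bound with its explicit factor $K$, or the stated form with $K$ declared a constant. Drop the adaptive-allocation sentence rather than rely on it.
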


\emph{Proof sketch.}
Let ${\hat{\tilde\mu}}$ be an estimator of ${\tilde\mu}$ and define $\hat\mu=W^{-1}{\hat{\tilde\mu}}$.
By submultiplicativity,
\begin{equation}
\|\hat\mu-\mu\|_2
=
\|W^{-1}({\hat{\tilde\mu}}-{\tilde\mu})\|_2
\le
\frac{1}{\sigma_{\min}(W)}\,\|{\hat{\tilde\mu}}-{\tilde\mu}\|_2.
\end{equation}
In the noiseless (direct-actuation) setting, standard fixed-confidence BAI algorithms satisfy \eqref{eq:Nclean-recall} pulls. Thus, to achieve the same accuracy on $\mu$ as in the noiseless case, it suffices to estimate
${\tilde\mu}$ more accurately by a factor $\sigma_{\min}(W)$.
Since statistical estimation error scales as $t^{-1/2}$, shrinking the target error by a factor
$\sigma_{\min}(W)$ increases the required sample size by a factor $1/\sigma_{\min}(W)^2$,
yielding \eqref{eq:baseline-inflation}.{\footnote{{Note that $\sigma_{\min}(W)\le 1$ for any square stochastic matrix, so $1/\sigma_{\min}(W)^2\ge 1$ and the baseline never improves over the noiseless case.}}}

\begin{remark}
  In general, the baseline inflation is governed by the conditioning of the mixing map via
$1/\sigma_{\min}(W)^2$. For the typewriter mixing matrix $W=(1-\varepsilon)I+\varepsilon S$, when $K$ is even
we have $\sigma_{\min}(W)=|1-2\varepsilon|$, so the inflation reduces to $(1-2\varepsilon)^{-2}$ and diverges
as $\varepsilon\to\tfrac12$, where identifiability fails.  
\end{remark}

\noindent\emph{Example ($C_5$,$C_6$):} 
\begin{itemize}[leftmargin=*]
    \item K=6 (even): $\sigma_{\min}(W)=|1-2\varepsilon|$ so baseline inflation blows up as $\varepsilon\to\tfrac12$ and is impossible at $\varepsilon=1/2$.
    \item K=5 (odd): $\sigma_{\min}(W)>0$ for all $\varepsilon \in (0,1)$ so baseline remains identifiable (no singularity), but still suffers conditioning-based inflation.
\end{itemize}

\section{Case 2: Zero-Error Block Codes}
\label{sec:typewriter-anchors}
In this section, we consider systems that enable a learner and an agent to preshare a fixed block code that achieves zero-error message transmission in $n$ channel uses. This removes $\varepsilon$-dependent conditioning entirely (since correctness is zero-error and depends only on the channel support), but introduces a {constant multiplicative slowdown whose value depends on the zero-error control scheme.} We present two generic such \emph{zero-error} coding construction methods, using for illustration the typewriter channels  $C_5$ and $C_6$ (Section~\ref{sec:typewriter}).


\begin{itemize}[leftmargin=*]
    \item \textbf{Scheme~1: zero-error capacity codebook.} A length-$n_u$
    zero-error codebook for $K$ messages allows the learner to install any arm index
    once every $n_u$ channel uses.
    \item \textbf{Scheme~2: independent-set schedules.} The learner uses a public
    schedule over independent sets of the confusability graph; in each time slot,
    only arms in the active independent set may be transmitted, but each such arm
    can be decoded in a \emph{single} channel use.
\end{itemize}



\subsection{Scheme 1: zero-error capacity codebook for arm updates.}
We illustrate on $C_5$, producing a $(2,1)$ zero-error code, where $(i, j)$ indicates $i$ channel use(s) to install $j$ arm choice(s).
\label{subsec:C5-slope}
Set $K=5$. Consider the length-$2$ codebook
\begin{equation}
\label{eq:C5-slope}
\varphi(i)\;\triangleq\;\big(i,\;2i \!\!\!\!\pmod 5\big),\qquad i\in\{0,1,2,3,4\}.
\end{equation}
Under the typewriter law, transmitting $\varphi(i)$ yields
\[
Y_1\in\{i,i+1\},\qquad
Y_2\in\{2i,2i+1\}\pmod 5.
\]

\begin{lemma}
\label{lem:C5-slope}
The codebook $\{\varphi(i)\}_{i=0}^4\subseteq \{0,\dots,4\}^2$ is zero-error for the typewriter channel
(with $C_5$). 
Equivalently, $\{\varphi(i)\}$ is an independent set in $C_5^{\boxtimes 2}$,
so $\alpha(C_5^{\boxtimes 2})\ge 5$.
\end{lemma}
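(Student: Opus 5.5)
The plan is to prove directly that the codewords $\varphi(i)$ and $\varphi(j)$ are non-adjacent in $C_5^{\boxtimes 2}$ for every $i\neq j$, i.e.\ that they are not confusable. By the definition of the strong power in Section~\ref{sec:zero-error-prelim}, two distinct sequences are adjacent iff in \emph{every} coordinate they are equal or adjacent in $C_5$. So it suffices to exhibit, for each pair $i\neq j$, one coordinate where the entries are distinct and non-adjacent in $C_5$, i.e.\ differ by $\pm 2 \pmod 5$. Equivalently, on the channel side, the output support sets $\mathcal{Y}(\varphi(i))=\{i,i+1\}\times\{2i,2i+1\}$ and $\mathcal{Y}(\varphi(j))$ must be disjoint. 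Then a decoder that maps each possible output pair to the unique codeword whose support contains it errs with probability zero. Since the five codewords are distinct, this also gives $\alpha(C_5^{\boxtimes 2})\ge 5$ via \eqref{eq:Mn-alpha}.

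The key step is a short case split on $d \triangleq j-i \pmod 5\in\{1,2,3,4\}$. The first coordinates differ by $d$ and the second by $2d \pmod 5$.
\begin{itemize}
\item If $d\in\{2,3\}$, i.e.\ $d\equiv\pm2$, the first coordinates are non-adjacent in $C_5$ (since $\{x,x'\}\in E$ iff $x'-x\equiv\pm1$).
\item If $d\in\{1,4\}$, i.e.\ $d\equiv\pm1$, then $2d\equiv\pm2$, so the second coordinates are non-adjacent.
\end{itemize}
In either case some coordinate separates the pair. This works because multiplication by $2$ maps $\{\pm1\}$ to $\{\pm2\}$ in $\mathbb{Z}_5$: whenever the first coordinate is confusable, the second is not.

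I would also check the adjacency characterization for the one-sided typewriter channel. Here $\mathcal{Y}(x)=\{x,x+1\}$, and $\{x,x+1\}\cap\{x',x'+1\}\neq\emptyset$ for $x\neq x'$ iff $x'-x\equiv\pm1 \pmod 5$, which is exactly $G=C_5$.

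The main obstacle is only bookkeeping. One must make sure that the coordinate-wise criterion, namely that one separating coordinate suffices, is correctly matched to disjointness of the product support sets. This holds because $\mathcal{Y}(x^n)=\prod_k \mathcal{Y}(x_k)$ for a memoryless channel, and a product of sets is empty iff some factor is. Beyond that, the argument is routine modular arithmetic.
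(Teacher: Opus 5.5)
Your proof is correct and takes essentially the same approach as the paper. The paper gives no separate proof: it records the output supports $Y_1\in\{i,i+1\}$, $Y_2\in\{2i,2i+1\} \pmod 5$ and leaves the pairwise disjointness of the sets $\{i,i+1\}\times\{2i,2i+1\}$ implicit; your case split on $d=j-i$, using that multiplication by $2$ sends $\pm1$ to $\pm2$ in $\mathbb{Z}_5$, carries out exactly that check.
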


This code conveys $\log_2 5$ bits in $2$ uses, i.e., rate $\tfrac{1}{2}\log_2 5$ bits/use.
Since $C_0(C_5)=\tfrac{1}{2}\log_2 5$ \cite{lovasz1979shannon}, it is capacity-achieving\footnote{By \eqref{eq:Mn-alpha} and Sec.~\ref{sec:typewriter}, the number of zero-error messages in $n$ channel uses is $M(n)=\alpha(C_K^{\boxtimes n})$.} on $C_5$.

\begin{proposition}
\label{lem:update-wrapper-modelS}
Assume there exists a fixed-length zero-error update code of blocklength $n_u$ with
$M(n_u)=\alpha(G^{\boxtimes n_u})\ge K$. Assume the agent executes the decoded update
\emph{immediately upon decoding}, i.e., on the last channel use of the packet.\\
Let $\mathsf A$ be any $\delta$-correct clean BAI algorithm and $\tau_{\rm clean}$ its pull count.
Then there exists an algorithm $\widetilde{\mathsf A}$ over the actuation link that is $\delta$-correct and whose
total number of physical rounds satisfies
\[
\tau \;\le\; n_u\,\tau_{\rm clean}.
\]
\end{proposition}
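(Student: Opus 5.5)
The plan is a direct simulation argument. Fix a zero-error codebook $\varphi:\mathcal K\to\mathcal K^{n_u}$ with $K$ codewords. Such a codebook exists because $M(n_u)=\alpha(G^{\boxtimes n_u})\ge K$ and zero-error codebooks correspond to independent sets of $G^{\boxtimes n_u}$, as in \eqref{eq:Mn-alpha}. Let $\psi$ be the associated decoder, which is shared with the agent. The wrapper $\widetilde{\mathsf A}$ runs $\mathsf A$ internally, one clean pull at a time. Whenever $\mathsf A$ requests arm $a_k$, the learner transmits the packet $\varphi(a_k)$ over the next $n_u$ channel uses.

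\textbf{Agent rule.} On the last channel use of the packet, the agent decodes $\psi(Y^{n_u})$ and pulls that arm. The wrapper forwards the resulting reward to $\mathsf A$ as the outcome of its $k$-th pull. During the other $n_u-1$ channel uses of the packet, the agent's pulls are treated as filler: their rewards are simply discarded. Since each round uses the channel once, every packet costs exactly $n_u$ physical rounds. When $\mathsf A$ stops, $\widetilde{\mathsf A}$ stops and outputs $\mathsf A$'s recommendation.

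\textbf{Key step: distributional equivalence.} I would show that the sequence of (requested arm, forwarded reward) pairs seen by $\mathsf A$ has exactly the law of its interaction with the clean bandit. Zero-error decoding gives $\psi(Y^{n_u})=\varphi^{-1}(\text{sent})$ almost surely, whatever the channel probabilities, since only the supports matter. So the executed arm equals $a_k$ with probability one. The reward at that round is then a fresh draw from $\nu_{a_k}$, independent of the past given $a_k$, because rewards are i.i.d. per pull.

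I would argue this by induction on $k$. Conditional on $\mathsf A$'s history, the $k$-th forwarded reward is distributed as $\nu_{a_k}$, just as in the clean interaction. The discarded filler rewards, along with any internal randomness of the channel, never enter $\mathsf A$'s history. Hence the stopping time $\tau_{\rm clean}$ and the output $\hat a$ have the same joint law as in the clean run. This gives $\Pr(\hat a=a^\star)\ge1-\delta$, and the total number of rounds is $\tau=n_u\tau_{\rm clean}$, which implies the claimed inequality.

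\textbf{Expected obstacle.} The main point needing care is the filler rounds. The agent must physically pull some arm in every round, and those pulls must not bias the experiment. I would handle this by letting the agent keep executing its previously decoded arm, or any fixed arm, during filler rounds, and by having the learner ignore those rewards. Independence of rewards across rounds then makes the filler pulls irrelevant to the law of what $\mathsf A$ observes. Formally, I would build both processes on a common probability space with a reward table indexed by (arm, pull count) to make the coupling explicit.
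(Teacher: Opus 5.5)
Your proposal is correct and follows the paper's proof essentially step for step. Like the paper, you split time into length-$n_u$ packets, each carrying the zero-error codeword of $\mathsf A$'s current request, and forward only the reward from the last (execute-on-decode) slot. You then show, by conditioning on the virtual history, that the forwarded arm--reward pairs have the clean joint law, and you count exactly $\tau = n_u\tau_{\rm clean}$ rounds. Your explicit handling of the filler rounds and the optional reward-table coupling are slightly more careful than the paper's remark that these rewards are ``logged but ignored,'' but they do not change the argument.
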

\emph{Proof sketch.}
Run $\mathsf A$ as a virtual algorithm.
Partition time into consecutive blocks of length $n_u$.
In block $i$, transmit the $n_u$-symbol zero-error codeword for the arm requested by $\mathsf A$ at virtual step $i$.
By assumption, the decoded arm is executed on the last slot of the block; use the reward from that last slot
as the virtual reward fed back to $\mathsf A$.
All intermediate rewards within each block can be logged but ignored.
Zero decoding error ensures the virtual interaction matches the clean bandit interaction exactly.
Thus, $\delta$-correctness is preserved and each virtual pull costs $n_u$ physical rounds.\ifarxiv\ Full proof in App.~\ref{app:update-wrapper}.\fi
\begin{corollary}
If $\mathsf A$ satisfies $\tau_{\rm clean}\le N_{\rm clean}(\delta,\mu)$, then $\tau\le n_u\,N_{\rm clean}(\delta,\mu)$.
\end{corollary}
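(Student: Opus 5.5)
The corollary follows by chaining the bound of Proposition~\ref{lem:update-wrapper-modelS} with the hypothesis on $\mathsf A$. First, I would invoke Proposition~\ref{lem:update-wrapper-modelS}, taking $\mathsf A$ to be the given $\delta$-correct clean algorithm. This yields the wrapped algorithm $\widetilde{\mathsf A}$, which is $\delta$-correct over the actuation link and satisfies $\tau\le n_u\,\tau_{\rm clean}$. I would note that this inequality holds pathwise: the block construction in that proof couples the physical run to a virtual clean run sample-by-sample. Because the code is zero-error, each decoded arm equals the arm requested by $\mathsf A$. The reward from the last slot of each block therefore has exactly the law $\nu_a$ of the requested arm, conditionally on the past.

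Second, I would establish that $\tau_{\rm clean}$ in the coupled virtual run has the same distribution as the pull count of $\mathsf A$ in a genuinely clean environment. The coupling shows the virtual interaction is distributed exactly as a clean interaction, so any bound on $\mathsf A$'s clean pull count carries over to $\tau_{\rm clean}$ in the coupled run. Multiplying the hypothesis $\tau_{\rm clean}\le N_{\rm clean}(\delta,\mu)$ by the constant $n_u\ge 1$ and chaining with the pathwise inequality gives $\tau\le n_u\,\tau_{\rm clean}\le n_u N_{\rm clean}(\delta,\mu)$.

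The main point needing care is the sense in which ``$\tau_{\rm clean}\le N_{\rm clean}$'' holds: almost surely, in expectation, or with probability at least $1-\delta$. In each case the same reasoning applies. The transfer holds because the coupled virtual run has the clean law. The final inequality holds because multiplication by $n_u$ is monotone and linear, so it commutes with expectations and preserves high-probability events. I would state the conclusion in the same sense as the hypothesis. Since $n_u$ depends only on $G$ and $K$ and not on $\varepsilon$, this makes explicit that the slowdown is a channel-support-dependent constant factor.
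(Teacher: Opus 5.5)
Your proposal is correct and takes the paper's route: the corollary is just the chain $\tau\le n_u\,\tau_{\rm clean}\le n_u\,N_{\rm clean}(\delta,\mu)$ from Proposition~\ref{lem:update-wrapper-modelS}, whose appendix proof in fact gives $\tau=n_u\,\tau_{\rm clean}$ pathwise. Your extra step of carrying the hypothesis through the coupling, in whichever sense it holds (almost surely, in expectation, or with probability at least $1-\delta$), is a sound refinement the paper leaves implicit.
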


\begin{corollary}
\label{cor:nstar-56-modelA-modelS}
For $K=5$, Lemma~\ref{lem:C5-slope} gives an independent set of size $5$ in $C_5^{\boxtimes 2}$, so $M(2)\ge 5$ while $M(1)=\alpha(C_5)=2<5$. Hence $n^\star(5)=2$. For $K=6$, since $\alpha(C_6)=3$, we have $M(1)=3<6$ and by Cartesian-product coding, $M(2)\ge \alpha(C_6)^2 = 9 \ge 6$, hence $n^\star(6)=2$ for {Scheme~1} type of codes.
Therefore, a fully flexible zero-error update packet exists with blocklength $n_u=2$ for both $K=5$ and $K=6$. By Lemma~\ref{lem:update-wrapper-modelS}, the number of physical rounds satisfies $\tau\le 2\,N_{\rm clean}(\delta,\mu)$.
\newline
\indent These guarantees hold for all $\varepsilon\in(0,1)$, whereas, the single-shot baseline can be non-identifiable at $\varepsilon=\nicefrac{1}{2}$ for even $K$.
\end{corollary}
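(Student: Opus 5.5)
The proof has three parts: the claim $n^\star(5)=2$, the claim $n^\star(6)=2$, and the round bound, which follows from Proposition~\ref{lem:update-wrapper-modelS}. Throughout we use the definition \eqref{eq:nstar-def} together with $M(n)=\alpha(G^{\boxtimes n})$ from \eqref{eq:Mn-alpha}. Showing $n^\star(K)=2$ requires two facts: $\alpha(G)<K$, which rules out $n=1$, and $\alpha(G^{\boxtimes 2})\ge K$, which makes $n=2$ feasible. It also uses that $M(n)$ is nondecreasing in $n$ (pad a codebook with a fixed symbol), though $n=1$ is the only smaller case anyway.

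\textbf{Case $C_5$.} First, $\alpha(C_5)=2$: the set $\{0,2\}$ is independent, and any three vertices of a $5$-cycle must include two consecutive ones, since three pairwise non-adjacent vertices on a $5$-cycle would need at least $6$ vertices. The lower bound $\alpha(C_5^{\boxtimes 2})\ge 5$ is exactly Lemma~\ref{lem:C5-slope}. If one wants to verify it directly, take $i\neq j$ with $d=j-i\not\equiv 0 \pmod 5$. Adjacency in $C_5^{\boxtimes 2}$ would require $d\in\{\pm1\}$ and $2d\in\{0,\pm1\} \pmod 5$. But $d=\pm1$ gives $2d=\pm2$, which is neither $0$ nor $\pm1$, so no two codewords are adjacent.

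\textbf{Case $C_6$.} Here $\{0,2,4\}$ is independent and any four vertices of a $6$-cycle contain an adjacent pair, so $\alpha(C_6)=3<6$. For $n=2$ we use the product $I\times I$ with $I=\{0,2,4\}$. Two distinct pairs in $I\times I$ differ in some coordinate, and in that coordinate the two symbols are distinct, non-adjacent elements of $I$. Hence they are non-adjacent in the strong product, which gives $\alpha(C_6^{\boxtimes 2})\ge 9\ge 6$. So $n^\star(6)=2$. The qualifier ``for Scheme~1 type of codes'' is just the statement that a fixed zero-error block code exists.

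\textbf{Round bound.} With $n_u=2$ and $M(2)\ge K$ for $K\in\{5,6\}$, Proposition~\ref{lem:update-wrapper-modelS} gives $\tau\le 2\tau_{\rm clean}$. Composing this with the preceding corollary, applied to a clean algorithm satisfying $\tau_{\rm clean}\le N_{\rm clean}(\delta,\mu)$, yields $\tau\le 2N_{\rm clean}(\delta,\mu)$. The bound is uniform in $\varepsilon$ because the confusability graph is $C_K$ for every $\varepsilon\in(0,1)$: zero-error decoding depends only on the support of $W$. The final contrast follows from the earlier Remark: for even $K$ at $\varepsilon=1/2$ we have $\sigma_{\min}(W)=0$, so $W$ is singular. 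To invoke Lemma~\ref{lem:baseline-nonident}, take $\mu'=\mu+cv$ with $v$ the alternating vector $(1,-1,1,\dots)$, which lies in the kernel of $W$, and choose $c$ large enough to change the argmax.

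The only mildly nontrivial step is the upper bounds $\alpha(C_5)=2$ and $\alpha(C_6)=3$, i.e.\ the counting argument on cycles. It is routine: an independent set in $C_K$ has size at most $\lfloor K/2\rfloor$, because each chosen vertex forces its clockwise successor to be unchosen.
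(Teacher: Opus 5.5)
Your proposal is correct and follows the same route as the paper: $\alpha(C_5)=2$ and $\alpha(C_6)=3$ rule out $n=1$, Lemma~\ref{lem:C5-slope} and the product code $\{0,2,4\}^2$ give $M(2)\ge K$, and Proposition~\ref{lem:update-wrapper-modelS} with $n_u=2$ (under its execute-on-decode assumption), combined with $\tau_{\rm clean}\le N_{\rm clean}(\delta,\mu)$, yields $\tau\le 2N_{\rm clean}(\delta,\mu)$. The details you add make explicit what the paper only asserts inline: the $\lfloor K/2\rfloor$ bound on independent sets of $C_K$, the direct check of the slope code, and the kernel vector $(1,-1,\dots,1,-1)$ of $W$ at $\varepsilon=1/2$.
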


\begin{remark}[Non–zero-error preshared codes]
\label{rem:nonzero-error-codes}
Instead of \emph{zero}-error codes, one could preshare a classical block code (e.g., Reed–Solomon) with small decoding error.
From the learner’s viewpoint this is equivalent to replacing $W$ by an effective DMC $\widetilde W$ on $\mathcal K$, so the analysis reduces to the
single-shot baseline of Section~\ref{sec:baseline} with $\sigma_{\min}(\widetilde W)$
instead of $\sigma_{\min}(W)$.
\end{remark}

\subsection{Scheme 2: independent sets using a public parity schedule}
\label{subsec:C6-calendar}

This scheme exploits the graph structure of $G$ more directly: instead of
sending arbitrary arms at every use, we restrict each time slot to an \emph{independent
set} of vertices, which allows zero-error decoding in a single channel use.
We illustrate on $C_6$ by providing a constrained $(1,1)$ zero-error code on $C_6$ with a public parity schedule.

\begin{lemma}
\label{lem:indset-decoding}
Let $G=(V,E)$ be the confusability graph of $W$ and $\mathcal S\subseteq V$ an independent set.
If the encoder restricts to inputs $X\in\mathcal S$ and the decoder knows $\mathcal S$,
then $X$ can be decoded with zero-error from a single channel output $Y$.
\end{lemma}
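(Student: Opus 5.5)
The approach is to give an explicit decoder and show it never errs, using only the definition of the confusability graph from Section~\ref{sec:zero-error-prelim}. The key observation is that independence of $\mathcal S$ in $G$ means the output supports of distinct elements of $\mathcal S$ are disjoint. Indeed, for distinct $x,x'\in\mathcal S$ we have $\{x,x'\}\notin E$, and by the definition of $E$ this gives $\mathcal Y(x)\cap\mathcal Y(x')=\emptyset$. So the family $\{\mathcal Y(x)\}_{x\in\mathcal S}$ is pairwise disjoint.

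Next I define the decoder $g:\mathcal Y\to\mathcal S$. If $y\in\mathcal Y(x)$ for some $x\in\mathcal S$, set $g(y)=x$; this $x$ is unique by the disjointness just shown, so $g$ is well defined. For outputs lying in no $\mathcal Y(x)$, $x\in\mathcal S$, let $g$ output an arbitrary fixed element of $\mathcal S$. The decoder uses only knowledge of $\mathcal S$ and of the support pattern of $W$; it does not need the transition probabilities, consistent with the paper's emphasis that zero-error properties depend only on supports.

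To verify zero error, suppose the encoder sends $X=x\in\mathcal S$. Then $Y\in\mathcal Y(x)$ with probability one, since $\Pr(Y\notin\mathcal Y(x)\mid X=x)=\sum_{y\notin\mathcal Y(x)}W(y\mid x)=0$. On that event $g(Y)=x$ by construction, so $\Pr(g(Y)\neq X\mid X=x)=0$ for every $x\in\mathcal S$, and hence for any input distribution supported on $\mathcal S$. The same argument restates the $n=1$ case of the independent-set/codebook equivalence behind \eqref{eq:Mn-alpha}.

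There is no real obstacle here. The only point needing care is that $g$ is well defined, which is exactly where independence, rather than mere distinctness, of $\mathcal S$ is used. As a sanity check for the parity schedule on $C_6$: with $\mathcal S=\{0,2,4\}$ the supports $\{0,1\},\{2,3\},\{4,5\}$ are disjoint, so $g(y)=2\lfloor y/2\rfloor$.
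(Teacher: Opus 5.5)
Your proposal is correct and follows essentially the same argument as the paper: independence of $\mathcal S$ gives pairwise disjoint output supports, so each output $y$ is consistent with at most one $x\in\mathcal S$. Decoding to the unique element of $\{x\in\mathcal S: W(y\mid x)>0\}$ therefore never errs; your handling of outputs outside every support and the $C_6$ check are harmless extras.
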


Fix a finite collection of independent sets
$\mathcal S_1,\dots,\mathcal S_m\subseteq\mathcal K$, and a public periodic schedule
$s:\mathbb N\to[m]$ known to both learner and agent. At time $t$, only arms in the
active set $\mathcal S_{s(t)}$ may be transmitted; by Lemma~\ref{lem:indset-decoding},
any arm in $\mathcal S_{s(t)}$ can be sent and decoded with zero-error in one channel use.

To simulate a clean BAI algorithm $\mathsf A$, the learner runs $\mathsf A$ virtually
and, whenever $\mathsf A$ requests arm $a$, waits until the first time $t$ such that
$a\in\mathcal S_{s(t)}$, then transmits $a$ and counts the resulting reward as the
virtual reward. The slowdown factor depends on how frequently each arm (or pair of
arms) appears in the active sets. We now instantiate this on $C_6$.

Set $K=6$. The cycle $C_6$ admits a $2$-coloring with independent sets
$\quad \mathcal S_{\mathrm{even}}=\{0,2,4\},\quad \mathcal S_{\mathrm{odd}}=\{1,3,5\}.$

Fix a \emph{public, deterministic schedule} known to encoder/decoder that specifies which set is active
at each channel use, e.g., alternating $\mathcal S_{\mathrm{even}},\mathcal S_{\mathrm{odd}},\dots$.
In a use where $\mathcal S$ is active, the encoder is restricted to transmit $X\in\mathcal S$.

\begin{lemma}
\label{lem:oneuse-indset}
For the typewriter channel on $C_6$, if $X$ is restricted to an independent set
$\mathcal S\in\{\mathcal S_{\mathrm{even}},\mathcal S_{\mathrm{odd}}\}$ known to the decoder, then $X$ can be
decoded from a single output $Y$ with zero-error.
\end{lemma}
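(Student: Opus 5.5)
The plan is to derive the statement as a direct instance of Lemma~\ref{lem:indset-decoding}. That reduces the proof to checking that $\mathcal S_{\mathrm{even}}=\{0,2,4\}$ and $\mathcal S_{\mathrm{odd}}=\{1,3,5\}$ are independent sets of the confusability graph $C_6$. Recall from Section~\ref{sec:typewriter} that for every $\varepsilon\in(0,1)$ the one-sided typewriter channel on $K=6$ has output supports $\mathcal Y(x)=\{x,x+1 \bmod 6\}$. Its only edges are therefore $\{i,i+1\}$ modulo $6$. Each such edge joins an even and an odd vertex, since $6$ is even and the wrap-around edge $\{5,0\}$ also has opposite parities. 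Hence no edge lies inside $\mathcal S_{\mathrm{even}}$ or inside $\mathcal S_{\mathrm{odd}}$, and both sets are independent.

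For a self-contained argument, I would also write down the explicit zero-error decoder. On $\mathcal S_{\mathrm{even}}$, the supports $\{0,1\}$, $\{2,3\}$, $\{4,5\}$ partition $\{0,\dots,5\}$. On $\mathcal S_{\mathrm{odd}}$, the supports $\{1,2\}$, $\{3,4\}$, $\{5,0\}$ also partition $\{0,\dots,5\}$. In each case the sets are pairwise disjoint, so the decoder is well defined. When $\mathcal S_{\mathrm{even}}$ is active, the decoder outputs $\hat X = 2\lfloor Y/2\rfloor$. When $\mathcal S_{\mathrm{odd}}$ is active, it outputs the unique odd $x$ with $Y\in\{x,x+1 \bmod 6\}$, namely $\hat X=Y$ if $Y$ is odd and $\hat X=Y-1 \bmod 6$ if $Y$ is even. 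For any transmitted $X\in\mathcal S$, the output satisfies $Y\in\mathcal Y(X)$ with probability one, and by disjointness $\hat X=X$ with probability one.

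The decoder needs to know which set is active. This is supplied by the public deterministic schedule: at each time $t$ both encoder and decoder know $s(t)$, so the decoder applies the rule for $\mathcal S_{s(t)}$. There is no real obstacle here. The only points needing care are the modular wrap-around edge $\{5,0\}$, which is where the parity of $K=6$ is essential (the same coloring fails on $C_5$), and the observation that the argument depends only on the supports. It therefore holds uniformly for all $\varepsilon\in(0,1)$.
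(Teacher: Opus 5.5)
Your proof is correct and takes essentially the same route as the paper. The paper also uses $Y\in\{X,X+1\}$ and the fact that independence of $\mathcal S$ forces $X+1\notin\mathcal S$, and decodes by $\hat x = y$ if $y\in\mathcal S$ and $\hat x = y-1 \pmod 6$ otherwise. That is exactly your rule (on $\mathcal S_{\mathrm{even}}$ it equals $2\lfloor Y/2\rfloor$); going through Lemma~\ref{lem:indset-decoding} and exhibiting the output partitions simply makes the same argument more explicit.
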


Each use conveys one of $|\mathcal S|=3$ possibilities with zero-error, a ternary digit per channel use,
where the active set $\mathcal S$ acts as shared side information. Moreover, for even cycles $C_{2m}$, $C_0(C_{2m})=\log_2 m$ \cite{lovasz1979shannon}; hence $C_0(C_6)=\log_2 3$. The above scheme then achieves $\log_2 3$ bits/use, and is capacity-achieving on $C_6$.

\begin{proposition}
\label{prop:C6-calendar-wrapper}
Consider Scheme~2 with $K=6$ (public alternating schedule
$\mathcal S_{\rm even}$, $\mathcal S_{\rm odd}$).
In each slot, the transmitted arm is decoded
with zero-error in one channel use (Lemma~\ref{lem:oneuse-indset}).
\newline
Let $\mathsf A$ be any $\delta$-correct clean BAI algorithm,
{let $\tau_{\rm clean}$ be its (random) number of pulls under no channel noise, and let
$a_1,\ldots,a_{\tau_{\rm clean}}$ denote the random sequence of arm
requests produced by $\mathsf A$ on the clean instance, and define the parity class
sequence}
\[
{p_i \triangleq
\begin{cases}
E, & a_i \in \mathcal S_{\rm even},\\
O, & a_i \in \mathcal S_{\rm odd}.
\end{cases}}
\]
{Then there exists an actuation wrapper that simulates $\mathsf A$ over the noisy
link with the same error probability and a stopping time $\tau$ satisfying}
\begin{equation}
\label{eq:C6-calendar-bound}
{\tau = \tau_{\rm clean} + \mathbf{1}_{\{p_1=O\}} + \textstyle\sum_{i=2}^{\tau_{\rm clean}} \mathbf{1}_{\{p_i=p_{i-1}\}} \le 2\tau_{\rm clean}.}
\end{equation}
\end{proposition}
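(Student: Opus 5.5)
The plan is to build the wrapper explicitly, in the same way as the proof of Proposition~\ref{lem:update-wrapper-modelS}, and then count idle slots. Fix the schedule so that slot $t=1,2,\dots$ activates $\mathcal S_{\rm even}$ when $t$ is odd and $\mathcal S_{\rm odd}$ when $t$ is even. The learner runs $\mathsf A$ virtually.

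\emph{Handling a request.} Suppose $\mathsf A$ requests its $i$-th arm $a_i$. The learner waits for the first slot $t_i>t_{i-1}$ (with $t_0=0$) whose active set contains $a_i$. In every slot strictly between $t_{i-1}$ and $t_i$, it transmits an arbitrary element of the active set, say its smallest element, and discards the resulting reward. In slot $t_i$ it transmits $a_i$.

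\emph{Correctness.} By Lemma~\ref{lem:oneuse-indset}, the agent decodes $a_i$ with zero error and pulls it, so the reward in slot $t_i$ has law $\nu_{a_i}$. The learner feeds this reward to $\mathsf A$ as the $i$-th virtual observation. Rewards are independent across rounds, and the discarded rewards never enter $\mathsf A$. Hence, by induction on $i$, the joint law of $(a_1,r_1,\dots)$ in the virtual interaction equals the law of $\mathsf A$ on the clean instance. In particular, the stopping time $\tau_{\rm clean}$, the output $\hat a$, and the error probability are identical. The wrapper stops at $\tau=t_{\tau_{\rm clean}}$. This is a valid stopping time, because the decision to stop is made by $\mathsf A$ from the observed rewards.

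\emph{Counting slots.} The key observation is $t_i-t_{i-1}\in\{1,2\}$. Since the two sets partition $\mathcal K$ and alternate, $a_i$ lies in exactly one of them, and every such set is active at least once in any two consecutive slots.
\begin{itemize}[leftmargin=*]
\item For $i=1$: slot $1$ is even-active, so $t_1=1$ if $p_1=E$ and $t_1=2$ if $p_1=O$, i.e.\ $t_1=1+\mathbf 1_{\{p_1=O\}}$.
\item For $i\ge 2$: slot $t_{i-1}$ had parity class $p_{i-1}$. The next slot has the opposite class. So $t_i=t_{i-1}+1$ if $p_i\ne p_{i-1}$, and $t_i=t_{i-1}+2$ if $p_i=p_{i-1}$. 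That is, $t_i-t_{i-1}=1+\mathbf 1_{\{p_i=p_{i-1}\}}$.
\end{itemize}
Telescoping these increments up to $i=\tau_{\rm clean}$ gives exactly \eqref{eq:C6-calendar-bound}. Each of the $\tau_{\rm clean}$ indicators is at most $1$, which gives $\tau\le 2\tau_{\rm clean}$.

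\emph{Where care is needed.} The only real subtlety is the distributional-equivalence step. The identity holds pathwise for the random request sequence, so I would state it as a coupling: the wrapper's virtual trajectory has the same law as the clean trajectory, and on each sample path the count above is exact. I also need to note that idle-slot transmissions never affect what $\mathsf A$ sees.
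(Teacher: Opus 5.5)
Your proposal is correct and follows essentially the same route as the paper's proof. It uses the same alternating schedule (with $\mathcal S_{\rm even}$ active on odd slots), the same serve-or-idle wrapper with discarded idle rewards, and the same coupling argument for preserving the error probability. It also uses the same recurrence $t_1=1+\mathbf 1_{\{p_1=O\}}$, $t_i=t_{i-1}+1+\mathbf 1_{\{p_i=p_{i-1}\}}$, telescoped to give the stated identity and the bound $\tau\le 2\tau_{\rm clean}$.
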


\emph{Proof sketch.}
The wrapper runs $\mathsf A$ virtually and only counts rewards when the requested arm belongs to the active parity class; otherwise it waits one slot and then transmits the arm. Zero-error decoding ensures the counted interaction matches the clean bandit path.
{If the $i$-th request has the same parity as the $(i{-}1)$-th, the wrapper must wait one extra slot; otherwise the next slot already has the correct parity. Summing these waiting slots yields~\eqref{eq:C6-calendar-bound}.}\ifarxiv\ Full proof in App.~\ref{app:prop-C6-calendar-wrapper}.\fi

{Equation~\eqref{eq:C6-calendar-bound} shows that the slowdown under Scheme~2 depends on the local
pattern of parity requests, not merely on the total number of pulls from
each parity class. In particular,}
\[
{\frac{\tau}{\tau_{\rm clean}}
=
1 + \frac{\mathbf 1\{p_1 = O\}
+\sum_{i=2}^{\tau_{\rm clean}}\mathbf 1\{p_i = p_{i-1}\}}
{\tau_{\rm clean}}
\;\in\; [1,2].}
\]
{When the requested parity alternates often, the slowdown is close to $1$; when the clean algorithm makes long runs
within the same parity class, the slowdown approaches $2$.}


\subsubsection{Beyond bipartitions: overlapping independent sets}
\label{subsec:overlapping-indsets}

The parity schedule on $C_6$ is based on a partition of the arms into two independent
sets. There is no requirement, however, that the active independent sets form a
partition: an arm may belong to several independent sets and thus be admissible in
multiple time slots. This can be exploited to reduce the worst-case slowdown constant.

As a simple illustration, consider $C_5$ with vertex set $\{0,1,2,3,4\}$.
Independent sets include, for example,
$\{1,3\}$,~$\{2,4\}$,~$\{0,2\},$
and we may use a length-$3$ periodic schedule that activates these sets in turn.
Arm $2$ then belongs to \emph{two} of the three sets and is therefore available in
$2/3$ of the slots, rather than $1/3$. This illustrates that one can bias the schedule toward specific arms by letting them
appear in multiple independent sets. Intuitively, this
higher availability fraction can translate into a smaller slowdown factor when
simulating a clean algorithm, because the wrapper needs to wait less frequently for an
arm to become admissible.

Designing schedules (and collections of independent sets) that optimally balance the
availability of individual arms and pairs of arms is a combinatorial problem closely
related to fractional colorings and covering designs of the confusability graph. In
principle, such designs can push the worst-case slowdown below the simple factor-$2$
bound in Prop.~\ref{prop:C6-calendar-wrapper}. We leave a systematic exploration
of this direction to future work.

\section{Case 3: Stateful Plan Execution via Packetized Successive Elimination (PSE)}
\label{sec:PSE}

The previous section showed that if we insist on issuing a fresh (zero-error) arm command for every pull,
then any clean BAI algorithm can be simulated with a {constant \emph{multiplicative} slowdown for per-pull zero-error control.} We now show that if the agent can maintain state and execute a \emph{multi-round plan}, then the learner needs to communicate only at decision times (phase boundaries). This converts actuation cost into an \emph{additive} overhead.

Throughout this section, the system acts continuously: while an $n$-symbol command packet is being transmitted, the agent keeps executing its currently committed behavior, and the newly decoded plan takes effect only after the packet is decoded (switching latency).


\subsection{Packetized Successive Elimination (PSE)}
\label{subsec:PSE}

We now describe Algorithm \ref{alg:PSE} (PSE) which builds on phased version of Successive Elimination \cite{even2006action}. The coding layer only appears through the existence of a zero-error plan packet
of length $n_r$ that conveys any \emph{phase plan}; explicit constructions (e.g., on $C_5$ and $C_6$) are
plugged in later.

\begin{algorithm}[t]
\caption{Packetized Successive Elimination (PSE)}
\label{alg:PSE}
\begin{algorithmic}[1]
\Require Confidence $\delta\in(0,1)$; phase budgets $m_r=2^{r-1}$; plan families $(\mathcal P_r)$;
zero-error plan packets of lengths $n_r$ with $\alpha(G^{\boxtimes n_r})\ge|\mathcal P_r|$.
\State $r\gets 1$, $S_1\gets \mathcal K$, choose a hold arm $h\in\mathcal K$.
\While{$|S_r|>1$}
    \State \textbf{Install:} transmit a length-$n_r$ zero-error packet for $\text{plan}(S_r,m_r)$. \hspace{1em}(\emph{During installation:} agent pulls $h$.)
    \State \textbf{Execute:} for each $a\in S_r$, pull arm $a$ for $m_r$ rounds and \emph{count} these rewards.
    \State Let $t_r \gets 2^r-1$. For each $a\in S_r$, set
    $\mathrm{UCB}_a=\widehat\mu_a(t_r)+\beta(t_r)$, $\mathrm{LCB}_a=\widehat\mu_a(t_r)-\beta(t_r)$.
    \State $b_r\in\arg\max_{a\in S_r}\mathrm{LCB}_a$, $S_{r+1}\gets\{a\in S_r:\mathrm{UCB}_a \ge \mathrm{LCB}_{b_r}\}$;\quad $h \gets$  \emph{the last arm pulled in $r$}.
    \State $r\gets r+1$.
\EndWhile
\State \Return the unique arm in $S_r$.
\end{algorithmic}
\end{algorithm}

\paragraph{Confidence radius and phase schedule.}
Let 
\[
\beta(t)\triangleq \sqrt{\nicefrac{2\log\!\left(\frac{8K t^2}{\delta}\right)}{t}}.
\]
We use phases $r=1,2,\dots$ with per-phase budget $m_r\triangleq 2^{r-1}$ \emph{counted} pulls per active arm,
so that the cumulative number of counted pulls per surviving arm at the end of phase $r$ is
$t_r = \sum_{j=1}^r m_j = 2^r-1$.

During plan installation (the $n_r$ channel uses), the agent keeps executing its previously committed arm.

\paragraph{Plan packets.}
In phase $r$, the learner selects an active set $S_r\subseteq\mathcal K$ and a pre-agreed repetition budget
$m_r\in\mathbb N$. The phase plan is:
\[
\text{plan}(S_r,m_r): \ \text{pull each arm in $S_r$ exactly $m_r$ times.} 
\]
Let $\mathcal P_r$ denote the set of admissible phase plans (equivalently, admissible active sets).
A \emph{zero-error plan packet} of length $n_r$ is any code that can convey any index in $\mathcal P_r$ with
zero decoding error; its existence is ensured whenever $\alpha(G^{\boxtimes n_r})\ge |\mathcal P_r|$.

\begin{theorem}
\label{thm:PSE}
Assume $1$-subGaussian rewards, then PSE is $\delta$-correct. Moreover, with probability at least $1-\delta$, its total number
of physical rounds satisfies
\[
\tau \;\le\; N_{\rm SE}(\delta,\mu)\;+\;\sum_{r=1}^{R} n_r,
\]
where $R$ is the number of executed phases and $N_{\rm SE}(\delta,\mu)$ is the (instance-dependent) pull
complexity of the \emph{clean} phased successive-elimination algorithm (standard analyses), i.e.,
\[
N_{\rm SE}(\delta,\mu)
= \tilde O\!\left(\sum_{a\neq a^\star}\frac{1}{\Delta_a^2}\log\!\frac{K\log(1/\Delta_a)}{\delta}\right).
\]
\end{theorem}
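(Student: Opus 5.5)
The plan is to reduce PSE to clean phased successive elimination, using zero-error packets, and then to account separately for the installation slots.

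\textbf{Step 1: decoding is exact.} Since $\alpha(G^{\boxtimes n_r})\ge|\mathcal P_r|$, the plan packet of phase $r$ is decoded with zero error. The agent therefore executes exactly $\text{plan}(S_r,m_r)$, i.e., each arm of $S_r$ is pulled exactly $m_r$ times.

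The rewards gathered during installation come from the hold arm $h$ and are never used in any estimate. Hence the counted rewards in phase $r$ are i.i.d.\ draws from $\nu_a$ for each $a\in S_r$. Since the same holds in every phase, after phase $r$ each surviving arm has exactly $t_r=2^r-1$ counted i.i.d.\ samples. The counted process thus has the same law as clean phased successive elimination; this mirrors the argument behind Lemma~\ref{lem:update-wrapper-modelS}.

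\textbf{Step 2: $\delta$-correctness.} Define the good event
\[
\mathcal E=\{\forall a,\ \forall r:\ |\widehat\mu_a(t_r)-\mu_a|\le\beta(t_r)\}.
\]
By $1$-subGaussianity, Hoeffding's inequality gives
\[
\Pr\big(|\widehat\mu_a(t)-\mu_a|>\beta(t)\big)\le 2\exp\big(-t\beta(t)^2/2\big)=\frac{\delta}{4Kt^2}.
\]
A union bound over the $K$ arms and the times $t=t_r$ then yields $\Pr(\mathcal E^c)\le\sum_t\delta/(4t^2)\le\delta$.

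On $\mathcal E$, the best arm satisfies $\mathrm{UCB}_{a^\star}\ge\mu_{a^\star}\ge\mu_{b_r}\ge\mathrm{LCB}_{b_r}$, so $a^\star$ is never eliminated. Consequently the returned arm equals $a^\star$ on $\mathcal E$.

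\textbf{Step 3: sample complexity of the counted pulls.} On $\mathcal E$, a suboptimal arm $a$ survives phase $r$ only if $\mathrm{UCB}_a\ge\mathrm{LCB}_{a^\star}$. Combined with $\mathcal E$, this forces $\Delta_a\le4\beta(t_r)$. Inverting this condition shows that arm $a$ is eliminated once
\[
t_r\gtrsim\Delta_a^{-2}\log\big(K t_r^2/\delta\big),
\]
which is solved by some $t_r=O\big(\Delta_a^{-2}\log(K\log(1/\Delta_a)/\delta)\big)$. The doubling schedule loses at most a factor $2$ when passing from the threshold to the actual $t_r$.

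Summing over arms gives the counted pulls $N_{\rm SE}(\delta,\mu)$. The best arm is charged the same as the last surviving suboptimal arm, which is absorbed into the sum. This is exactly the standard SE analysis, and the fixed-point inversion of $t\approx\Delta^{-2}\log(Kt^2/\delta)$ is the main technical step. I would handle it with the usual lemma that $t\ge c\,\Delta^{-2}\log(K/(\delta\Delta))$ suffices, which gives the $\log\log$ form after accounting for the phase indexing.

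\textbf{Step 4: adding the installation slots.} The physical rounds are the counted execution pulls plus the $n_r$ installation slots of each executed phase. Because every round uses the channel once and execution never overlaps installation in the accounting, we obtain on $\mathcal E$
\[
\tau\le N_{\rm SE}(\delta,\mu)+\sum_{r=1}^R n_r.
\]
The one subtle point is the claim that the hold-arm pulls during installation do not bias the estimates. This holds because they are explicitly excluded from the counts, and the phase decisions depend only on counted samples.
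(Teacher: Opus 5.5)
Your proposal is correct and follows the paper's proof essentially step for step: zero-error decoding makes the counted samples an exact replica of clean phased SE (the paper formalizes this with an explicit coupling that draws counted and installation rewards from separate i.i.d.\ pools), a subGaussian union bound gives $\mathcal E$, the $\mathrm{UCB}/\mathrm{LCB}$ comparison gives soundness and elimination once $4\beta(t_r)<\Delta_a$, and the exact decomposition $\tau=N_{\rm counted}+\sum_{r=1}^R n_r$ finishes the argument. One small correction to your Step 3: with $\beta(t)$ as defined ($t^2$ inside the logarithm), inverting $4\beta(t)<\Delta_a$ gives $t=O\big(\Delta_a^{-2}\log(K/(\delta\Delta_a))\big)$, matching the paper's Remark, and phase indexing does not produce the $\log\log$ form unless the radius itself is union-bounded over $r$ rather than $t$; the $\tilde O$ in the statement absorbs the difference, so your conclusion stands.
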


\emph{Proof sketch.}
Define the standard uniform concentration event $
\mathcal E \triangleq \bigcap_{a\in\mathcal K}\bigcap_{t\ge 1}
\left\{|\widehat\mu_a(t)-\mu_a|\le \beta(t)\right\}.
$
By a union bound over arms and times, $\Pr(\mathcal E)\ge 1-\delta$. 
\newline
On $\mathcal E$, the elimination rule of phased successive elimination is sound: $a^\star$ is never removed, and any
suboptimal arm $a$ is eliminated once $4\beta(t)\le \Delta_a$ at the end of some phase (since
$\mathrm{UCB}_a\le \mu_a+2\beta(t)$ and $\mathrm{LCB}_{b_r}\ge \mathrm{LCB}_{a^\star}\ge \mu_{a^\star}-2\beta(t)$).
\newline
Crucially, PSE's \emph{counted} rewards are generated by exactly the same arm sequence and sample sizes as the clean
phased-SE algorithm (the additional $n_r$ rounds during plan installation are simply ignored for the statistical test).
Therefore, on $\mathcal E$, the number of counted pulls until termination is at most $N_{\rm SE}(\delta,\mu)$.
\newline
Finally, each executed phase incurs $n_r$ additional physical rounds for plan installation (switching latency), yielding
$\tau \le N_{\rm SE}(\delta,\mu)+\sum_{r=1}^R n_r$ on $\mathcal E$.\ifarxiv\ See full proof in App.~\ref{app:proof-PSE}.\fi


\begin{remark}
For $\beta(t)=\sqrt{2\log(8Kt^2/\delta)/t}$, the condition $4\beta(t)\le \Delta$ is implied by
$t \ge \frac{c}{\Delta^2}\log\!\Big(\frac{c'K}{\delta\Delta^2}\Big)$ for universal constants $c,c'>0$.
Consequently, $R^\star = 1+\lceil \log_2 T_{\max}\rceil = O(\log(1/\Delta_{\min}))$.
\end{remark}

\begin{corollary}
\label{cor:PSE-C5C6}
Consider the one-sided typewriter channel with $\varepsilon\in(0,1)$, so the confusability graph is $C_K$.
Assume the simple plan family $\mathcal P_r=\{\,\text{plan}(S,m_r):\emptyset\neq S\subseteq\mathcal K\,\}$,
so $|\mathcal P_r|\le 2^K-1$ for all $r$. 
\newline
\emph{(i) $K=5$.} Using the base-$5$ digit on $C_5$ (two uses per digit),
$n_r = 2\Big\lceil \log_5 |\mathcal P_r| \Big\rceil \le 2\Big\lceil \log_5(2^5) \Big\rceil = 6.$
\newline
\emph{(ii) $K=6$.} Using the calendar interface on $C_6$ (one ternary digit per use),
$n_r = \Big\lceil \log_3 |\mathcal P_r| \Big\rceil \le \Big\lceil \log_3(2^6) \Big\rceil = 4.$
\newline
In both cases, Theorem~\ref{thm:PSE} gives $\tau \le N_{\rm SE}(\delta,\mu) + O(R)$ with small constants:
at most $6R$ on $C_5$ and $4R$ on $C_6$.
\end{corollary}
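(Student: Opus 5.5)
The corollary reduces to Theorem~\ref{thm:PSE}: it suffices to exhibit, for each $K\in\{5,6\}$, a zero-error plan packet of the claimed length that can index every element of $\mathcal P_r$. The theorem then applies verbatim, with $\sum_{r=1}^R n_r\le R\max_r n_r$.

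\textbf{Counting plans.} Since $m_r=2^{r-1}$ is fixed by the phase index $r$ and both sides know $r$, a plan in $\mathcal P_r$ is determined by its active set $S$. The map $S\mapsto \text{plan}(S,m_r)$ is therefore a bijection onto nonempty subsets of $\mathcal K$, which gives $|\mathcal P_r|\le 2^K-1<2^K$. Next I would fix an injective indexing of $\mathcal P_r$ into $\{0,\dots,|\mathcal P_r|-1\}$, shared by learner and agent.

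\textbf{Case $K=5$.} I will write the index in base $5$, using $d=\lceil\log_5|\mathcal P_r|\rceil$ digits. Each digit is sent with the length-$2$ codebook $\varphi$ of \eqref{eq:C5-slope}, which is zero-error by Lemma~\ref{lem:C5-slope}. The channel is memoryless, the digit blocks occupy disjoint channel uses, and each block decodes correctly with certainty; hence the concatenation is a zero-error code of length $2d$. Equivalently, the $d$-fold Cartesian product of independent sets is an independent set in $C_5^{\boxtimes 2d}$, so $\alpha(C_5^{\boxtimes 2d})\ge 5^d\ge|\mathcal P_r|$. Numerically, $2^5=32$ and $5^2<32\le 5^3$, so $d\le 3$ and $n_r\le 6$.

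\textbf{Case $K=6$.} Here I use base $3$, with $d=\lceil\log_3|\mathcal P_r|\rceil$ digits, one per channel use. Each use is restricted to the active parity set prescribed by the public alternating schedule. By Lemma~\ref{lem:oneuse-indset}, each ternary digit, mapped to the appropriate element of $\mathcal S_{\rm even}$ or $\mathcal S_{\rm odd}$, is decoded with zero error. The only subtle point is synchronization: the packet may begin on either parity. That is harmless, because the digit-to-symbol map can be taken relative to whichever set is active, and both parties know the slot index. Numerically, $3^3=27<64\le 81=3^4$, so $n_r\le 4$.

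\textbf{Obstacle.} The main point needing care is that, during installation, the agent pulls the hold arm $h$ while receiving packet symbols, so channel inputs and executed arms are decoupled. This is already built into Algorithm~\ref{alg:PSE}: installation rounds are not counted, and this decoupling is exactly what the additive term in Theorem~\ref{thm:PSE} accounts for. So on the probability-$(1-\delta)$ event of the theorem we get $\tau\le N_{\rm SE}(\delta,\mu)+\sum_{r\le R}n_r\le N_{\rm SE}(\delta,\mu)+6R$ on $C_5$, and $+4R$ on $C_6$. Everything depends only on the confusability graph, so the bound holds for all $\varepsilon\in(0,1)$.
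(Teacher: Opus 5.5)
Your proposal is correct and follows essentially the same route as the paper: bound $|\mathcal P_r|\le 2^K-1$, send the plan index as concatenated base-$5$ digits via the $(2,1)$ code $\varphi$ on $C_5$ (so $n_r\le 6$) or as one ternary digit per use via the parity calendar on $C_6$ (so $n_r\le 4$), and then substitute $\sum_{r\le R} n_r\le R\max_r n_r$ into Theorem~\ref{thm:PSE}. Your explicit checks are details the paper leaves implicit, and they are welcome: that concatenated codewords remain independent in the strong product, and that the calendar code is zero-error whichever parity the packet starts on.
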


Thus when the statistical term $N_{\rm SE}(\delta,\mu)$ dominates (small gaps / small $\delta$), PSE is dramatically better than per-pull zero-error updates, which incur a multiplicative constant $\approx n^\star(\mathcal{K})$.

\newpage
\IEEEtriggeratref{9}
\bibliographystyle{IEEEtran}
\bibliography{references}




\ifarxiv
\onecolumn 
\appendices 


\section{Proof of Proposition~\ref{lem:update-wrapper-modelS}}
\label{app:update-wrapper}

Fix a zero-error update code of blocklength $n_u$ for the message set $\mathcal K$:
an encoder $\mathrm{enc}:\mathcal K\to\mathcal X^{n_u}$ and decoder
$\mathrm{dec}:\mathcal Y^{n_u}\to\mathcal K$ such that for every $a\in\mathcal K$,
if $X^{n_u}=\mathrm{enc}(a)$ is transmitted then $\mathrm{dec}(Y^{n_u})=a$ almost surely.

We construct $\widetilde{\mathsf A}$ by simulating $\mathsf A$ as a \emph{virtual} clean algorithm.
Partition physical time into blocks $B_i\triangleq\{(i-1)n_u+1,\dots,in_u\}$ for $i=1,2,\dots$,
and let $t_i\triangleq in_u$ denote the last slot of block $i$.

At virtual step $i$, the simulated algorithm $\mathsf A$ outputs an arm request
$a_i\in\mathcal K$ as a measurable function of its past virtual rewards.
During physical block $B_i$, the learner transmits the length-$n_u$ codeword
$\mathrm{enc}(a_i)$ over the actuation channel.
By the assumed timing model, the agent executes the decoded arm \emph{immediately upon decoding},
i.e., in the last slot $t_i$ the executed arm equals $a_i$.
Define the virtual reward fed back to $\mathsf A$ at step $i$ to be the physical reward
$r_{t_i}$ observed in that last slot. All other rewards inside the block are ignored.

We now verify that the virtual interaction has the same law as a clean run of $\mathsf A$.
Condition on the virtual history up to step $i-1$, equivalently, on the $\sigma$-field generated by
$\{a_1,r_{t_1},\dots,a_{i-1},r_{t_{i-1}}\}$. Then the requested arm $a_i$ is determined.
By zero-error decoding and the “execute-on-decode” assumption, the executed arm at time $t_i$ equals $a_i$,
so the reward $r_{t_i}$ is distributed exactly as a clean reward sample from arm $a_i$
(and is independent of the past conditional on $a_i$ under the standard bandit model).
Hence the joint law of $(a_i,r_{t_i})_{i\ge 1}$ under $\widetilde{\mathsf A}$
matches the joint law of the arm/reward pairs produced by $\mathsf A$ in the clean bandit instance.
Therefore, $\widetilde{\mathsf A}$ is $\delta$-correct whenever $\mathsf A$ is $\delta$-correct.

Finally, if $\mathsf A$ stops after $\tau_{\rm clean}$ virtual pulls, then $\widetilde{\mathsf A}$
uses exactly $\tau = n_u\,\tau_{\rm clean}$ physical rounds (each virtual pull consumes one full block),
which proves the claimed time bound.

\section{Proof of Proposition~\ref{prop:C6-calendar-wrapper}}
\label{app:prop-C6-calendar-wrapper}
Let the public schedule activate $\mathcal S_{\rm even}$ on odd time slots and $\mathcal S_{\rm odd}$
on even time slots:
\[
s(t)=
\begin{cases}
\mathrm{even}, & t \text{ odd},\\
\mathrm{odd},  & t \text{ even}.
\end{cases}
\]
By Lemma~\ref{lem:oneuse-indset}, any transmitted arm in the active set is decoded with zero-error
in that same slot and is executed as intended.

We define a wrapper that simulates $\mathsf A$ as a virtual clean algorithm driven only by \emph{counted}
rewards. Let $a_1,a_2,\dots$ be the random sequence of arm requests produced by $\mathsf A$ when run
on the clean instance, and let $\tau_{\rm clean}$ be its (random) stopping time.
The wrapper maintains a virtual step counter $i$, initialized to $i=1$.
At each physical slot $t$:
\begin{itemize}[leftmargin=*]
\item If $a_i\in \mathcal S_{s(t)}$, i.e., the requested arm is admissible in the active parity class,
the learner transmits $a_i$, the agent executes $a_i$, the wrapper \emph{counts} the resulting reward
as the virtual reward for step $i$, and increments $i\leftarrow i+1$.
\item Otherwise, the learner transmits any arm in the active set, e.g., an arbitrary fixed element of
$\mathcal S_{s(t)}$, and discards the reward (this slot is an uncounted ``wait'' slot).
\end{itemize}
The wrapper terminates when $i=\tau_{\rm clean}+1$, i.e., once it has produced $\tau_{\rm clean}$ counted rewards.

\emph{Correctness \& coupling.}
Consider the subsequence of physical slots at which the wrapper counts rewards. 
By construction, at the $i$-th counted slot the executed arm equals $a_i$ (zero-error decoding and immediate execution).
Hence the $i$-th counted reward has exactly the same conditional distribution as a clean reward sample
from arm $a_i$. Since $\mathsf A$ chooses $a_i$ as a measurable function of past counted rewards,
the sequence of counted arm-reward pairs produced by the wrapper has the same joint distribution as the
clean interaction of $\mathsf A$. Therefore the wrapper preserves the error probability of $\mathsf A$,
and the resulting algorithm is $\delta$-correct.

\emph{Time bound.}

{Let $p_i \in \{E,O\}$ denote the parity class of the $i$-th arm
request $a_i$ made by the clean run of $\mathsf A$, where $E$ corresponds to
$\mathcal S_{\rm even}$ and $O$ to $\mathcal S_{\rm odd}$. Let $t_i$ be the
physical time at which the wrapper obtains the $i$-th counted reward.
Because the public schedule alternates between $\mathcal S_{\rm even}$ and
$\mathcal S_{\rm odd}$, and the wrapper always serves the current request at the
first slot in which its parity class is active, we have}
\[
{t_1 = 1 + \mathbf 1\{p_1 = O\},}
\]
{and for every $i \ge 2$,}
\[
{t_i = t_{i-1} + 1 + \mathbf 1\{p_i = p_{i-1}\}.}
\]
{Indeed, if $p_i \neq p_{i-1}$ then the next requested parity class is active
in the very next slot, whereas if $p_i = p_{i-1}$ the wrapper must wait one
additional slot before serving it. Summing the recurrence yields}
\[
{\tau = t_{\tau_{\rm clean}}
= \tau_{\rm clean}
+\mathbf 1\{p_1 = O\}
+\sum_{i=2}^{\tau_{\rm clean}}\mathbf 1\{p_i = p_{i-1}\}
\le 2\tau_{\rm clean}.}
\]

\section{Proof of Theorem~\ref{thm:PSE}}
\label{app:proof-PSE}

We recall that rewards are independent across rounds, and when arm $a\in\mathcal K$ is executed
the reward distribution has mean $\mu_a$ and is $1$-subGaussian. The confidence radius is
\[
\beta(t)\triangleq \sqrt{\frac{2\log\!\left(\frac{8K t^2}{\delta}\right)}{t}},\qquad t\ge 1.
\]
For PSE, in phase $r$ the per-arm \emph{counted} budget is $m_r=2^{r-1}$, hence each arm in $S_r$
has exactly
\[
t_r \triangleq \sum_{j=1}^r m_j = 2^r-1
\]
counted samples at the end of phase $r$.

\subsection{Step 1: Anytime concentration with explicit constants}

For each arm $a$, let $\widehat\mu_a(t)$ denote the empirical mean of the first $t$ \emph{counted} rewards
obtained when executing arm $a$. (Equivalently, one may imagine an infinite i.i.d.\ sequence of arm-$a$ rewards,
and $\widehat\mu_a(t)$ is the mean of the first $t$ draws; this is standard and matches the algorithm because
PSE only ever uses $\widehat\mu_a(t)$ at times $t=t_r$ for arms still active.)

\begin{lemma}
\label{lem:anytime-conc}
Define the event
\[
\mathcal E \triangleq \bigcap_{a\in\mathcal K}\bigcap_{t\ge 1}
\left\{\,|\widehat\mu_a(t)-\mu_a|\le \beta(t)\right\}.
\]
If rewards are $1$-subGaussian, then $\Pr(\mathcal E)\ge 1-\delta$.
\end{lemma}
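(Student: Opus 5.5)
The plan is to prove the lemma by a fixed-$t$ subGaussian tail bound followed by a union bound over arms and sample counts, so that the logarithmic factor $\log(8Kt^2/\delta)$ in $\beta(t)$ pays for both.

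First, fix an arm $a$ and a count $t\ge 1$. Following the convention stated just before the lemma, $\widehat\mu_a(t)$ is the mean of the first $t$ draws of an i.i.d.\ sequence of $1$-subGaussian rewards with mean $\mu_a$. Thus $\widehat\mu_a(t)-\mu_a$ is $(1/t)$-subGaussian, and the Chernoff bound gives
\[
\Pr\big(|\widehat\mu_a(t)-\mu_a|>x\big)\le 2\exp\!\big(-t x^2/2\big).
\]
Substituting $x=\beta(t)$ yields $2\exp(-\log(8Kt^2/\delta)) = \delta/(4Kt^2)$.

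Second, take a union bound over $a\in\mathcal K$ and $t\ge 1$:
\[
\Pr(\mathcal E^c)\le \sum_{a\in\mathcal K}\sum_{t\ge 1}\frac{\delta}{4Kt^2}=\frac{\delta}{4}\cdot\frac{\pi^2}{6}<\delta .
\]
Hence $\Pr(\mathcal E)\ge 1-\delta$, in fact with slack.

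The one point that needs care is justifying the i.i.d.\ stack-of-rewards model. The counted samples of arm $a$ are collected at adaptively chosen times, and the number of samples used is random. The standard remedy is to couple the bandit with independent infinite reward sequences $(X_{a,s})_{s\ge1}$, one per arm, where the $s$-th counted pull of arm $a$ returns $X_{a,s}$. This coupling gives the same law for the whole interaction. Under it, $\widehat\mu_a(t)$ is a deterministic function of the stack for every $t$, so the union bound above applies uniformly. Because the bound holds for all $t$ simultaneously, adaptivity and random stopping do not matter.

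The uncounted rewards collected during plan installation never enter $\widehat\mu_a$, so they require no treatment.
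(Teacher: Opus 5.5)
Your proposal is correct and follows the paper's proof essentially verbatim. Both use the fixed-$(a,t)$ subGaussian tail bound evaluated at $\beta(t)$, which gives $\delta/(4Kt^2)$, followed by a union bound over arms and $t\ge 1$ that yields $\frac{\delta}{4}\cdot\frac{\pi^2}{6}<\delta$; your remark on the per-arm reward-stack coupling matches the convention the paper states just before the lemma.
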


\emph{Proof:}
Fix an arm $a$ and a time $t\ge 1$. Since rewards are $1$-subGaussian, the empirical mean satisfies
\[
\Pr\!\left(\left|\widehat\mu_a(t)-\mu_a\right|>\varepsilon\right)\le 2\exp\!\left(-\frac{t\varepsilon^2}{2}\right)
\qquad\forall \varepsilon>0.
\]
Plugging in $\varepsilon=\beta(t)$ yields
\begin{align*}
\Pr\!\left(\left|\widehat\mu_a(t)-\mu_a\right|>\beta(t)\right)
&\le 2\exp\!\left(-\frac{t\beta(t)^2}{2}\right)\\
&= 2\exp\!\left(-\log\!\left(\frac{8Kt^2}{\delta}\right)\right)\\
&= \frac{\delta}{4K t^2}.
\end{align*}
By a union bound over all $a\in\mathcal K$ and all $t\ge 1$,
\[
\Pr(\mathcal E^c)
\le \sum_{a\in\mathcal K}\sum_{t=1}^\infty \frac{\delta}{4K t^2}
= \frac{\delta}{4}\sum_{t=1}^\infty \frac{1}{t^2}
\le \frac{\delta}{4}\cdot \frac{\pi^2}{6}
< \delta,
\]
hence $\Pr(\mathcal E)\ge 1-\delta$.

\subsection{Step 2: Exact coupling of counted samples}

We now state the coupling lemma formally. Define the \emph{clean} phased SE algorithm as the same phased
procedure as PSE but with no installation delays: in phase $r$, it pulls each arm in $S_r$ exactly $m_r$
times (counting all rewards), then applies the same elimination rule as PSE using $\beta(t_r)$, and repeats
until one arm remains.

\begin{lemma}
\label{lem:counted-coupling}
Assume every phase-$r$ plan packet in PSE is decoded with zero error, so the agent executes the intended
phase plan $\text{\rm plan}(S_r,m_r)$ exactly.
Let $(A^{\rm cnt}_s,R^{\rm cnt}_s)_{s\ge 1}$ denote the sequence of \emph{counted} arm pulls and counted rewards
produced by PSE, indexed by counted time $s=1,2,\dots$, i.e., ignoring all installation rounds.
Let $(A^{\rm clean}_s,R^{\rm clean}_s)_{s\ge 1}$ denote the pull, or equivalently reward, sequence of the clean phased-SE algorithm
run with the same $\beta(\cdot)$, the same phase budgets $m_r$, and the same elimination rule.

Then there exists a coupling under which
\[
(A^{\rm cnt}_s,R^{\rm cnt}_s)_{s\ge 1}\equiv (A^{\rm clean}_s,R^{\rm clean}_s)_{s\ge 1}
\qquad\text{almost surely}.
\]
In particular, PSE and clean phased-SE compute identical empirical means from counted samples, hence produce the same
active sets $(S_r)$ and stop after the same number of counted pulls.
\end{lemma}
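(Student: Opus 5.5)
We build the coupling explicitly via the ``reward tape'' (stack-of-rewards) construction. On a common probability space, draw for each arm $a\in\mathcal K$ an infinite i.i.d.\ sequence $(Z_{a,1},Z_{a,2},\dots)$ with law $\nu_a$, independent across arms. Both algorithms read from these tapes: whenever an algorithm's $j$-th counted pull of arm $a$ occurs, its reward is set to $Z_{a,j}$. Installation-round rewards in PSE (pulls of the hold arm $h$) are drawn from a separate, independent family of tapes, so they never consume entries of the counted tapes. First we check that this construction reproduces the correct law of each algorithm. Under PSE, the executed arm at each counted round is a deterministic function of the past (by zero-error decoding, the agent executes exactly $\text{plan}(S_r,m_r)$ in a fixed pre-agreed order). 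Hence, conditional on the past, the counted reward is a fresh draw from $\nu_{A^{\rm cnt}_s}$, which is the standard bandit law. The same argument applies to clean phased-SE.

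Next we prove $(A^{\rm cnt}_s,R^{\rm cnt}_s)=(A^{\rm clean}_s,R^{\rm clean}_s)$ by induction on the phase index $r$. The inductive hypothesis is that both algorithms enter phase $r$ with the same active set $S_r$ and that, for each $a\in S_r$, both have consumed exactly $t_{r-1}=2^{r-1}-1$ entries of tape $a$. The base case is $S_1=\mathcal K$ with $t_0=0$. For the step, the key observation is that installation rounds are uncounted and draw from separate tapes. In phase $r$ both algorithms then execute the same plan in the same within-phase order, so the counted arm sequences agree. They also read the same tape entries $Z_{a,t_{r-1}+1},\dots,Z_{a,t_r}$, so the counted rewards agree. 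Consequently the empirical means $\widehat\mu_a(t_r)$ coincide. Since $\beta(t_r)$ and the elimination rule $S_{r+1}=\{a:\mathrm{UCB}_a\ge \mathrm{LCB}_{b_r}\}$ are deterministic functions of these means, with a fixed tie-breaking rule for $b_r$, we get the same $S_{r+1}$. The stopping rule $|S_r|=1$ and the output also coincide, and therefore so does the number of counted pulls.

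The main obstacle is the measure-theoretic care in the correctness step. One must argue that indexing rewards by per-arm pull count, rather than by physical time, yields the same joint law as the original interaction protocol. For PSE this is the point where the installation rounds matter: they depend on the hold arm $h$, which itself depends on past counted data. I would handle it with the standard result that, for any policy, the tape model and the sequential model induce identical laws on action--reward histories (e.g., \cite{lattimore2020bandit}). I would also emphasize that uncounted rewards never enter any decision of PSE, so they can be drawn from independent tapes without affecting the counted process. The assumption of zero-error plan packets is exactly what makes the executed arm sequence a deterministic function of the counted history, which closes the induction.
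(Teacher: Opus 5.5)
Your proposal is correct and follows essentially the same route as the paper: separate i.i.d.\ reward pools for counted and installation pulls, counted rewards indexed by per-arm pull count, and an induction over phases (same plan, same tape entries, same empirical means, hence the same $S_{r+1}$). Your additions make explicit two details the paper leaves implicit, namely fixing the within-phase pull order and citing the tape-model equivalence; the fixed tie-breaking rule for $b_r$ is harmless but unnecessary, since $S_{r+1}$ depends only on the value $\max_{b}\mathrm{LCB}_b$.
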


\emph{Proof:}
Construct a probability space as follows. For each arm $a\in\mathcal K$, generate two independent i.i.d.\ sequences:
\[
\left(R^{(c)}_{a,1},R^{(c)}_{a,2},\dots\right)\quad\text{and}\quad
\left(R^{(u)}_{a,1},R^{(u)}_{a,2},\dots\right),
\]
each distributed as the arm-$a$ reward law (mean $\mu_a$, $1$-subGaussian).
Interpret $R^{(c)}$ as the rewards that will be used for \emph{counted} pulls of arm $a$, and $R^{(u)}$ as rewards
used for \emph{uncounted} pulls, i.e., installation rounds. This is valid because, in the true system, every time arm $a$
is executed the reward is an independent draw from the same law; splitting draws into two independent pools
preserves the joint law of any finite collection of executed rewards.

Now run the clean phased-SE algorithm and define that its $j$-th pull of arm $a$ receives reward $R^{(c)}_{a,j}$.
Run PSE on the same probability space, with the following reward assignment:
\begin{itemize}[leftmargin=*]
\item During \emph{execution} (counted) rounds, whenever PSE executes arm $a$ for the $j$-th counted time,
it receives reward $R^{(c)}_{a,j}$.
\item During \emph{installation} (uncounted) rounds, PSE executes the hold arm $h$ (the last arm pulled
in the previous phase) and receives rewards from the $R^{(u)}$ pool (which are never used in estimates).
\end{itemize}

Because plan packets decode with zero error, in each phase $r$, PSE executes exactly the intended plan
$\text{plan}(S_r,m_r)$ in its execution segment. Therefore, the counted arm sequence of PSE in phase $r$ is exactly
the same as the arm sequence of the clean phased-SE algorithm in phase $r$ (both pull each $a\in S_r$ exactly $m_r$
times). Under the construction above, they also receive exactly the same counted rewards, namely the corresponding
$R^{(c)}$ samples. Hence, after phase $r$, both algorithms compute identical empirical means
$\widehat\mu_a(t_r)$ for each $a\in S_r$, apply the same elimination rule, and therefore produce the same $S_{r+1}$.

By induction over phases, the entire counted pull/reward sequence agrees almost surely under this coupling.

\subsection{Step 3: Correctness of elimination on the good event}

\begin{lemma}
\label{lem:elim-sound}
On the event $\mathcal E$ from Lemma~\ref{lem:anytime-conc}, PSE never eliminates the best arm $a^\star$.
Moreover, fix any suboptimal $a\neq a^\star$. For any phase index $r$ such that
\begin{equation}
\label{eq:elim-condition-4beta}
4\beta(t_r) < \Delta_a,
\end{equation}
arm $a$ is eliminated at the end of phase $r$ (i.e., $a\notin S_{r+1}$).
\end{lemma}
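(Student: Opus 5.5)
We argue throughout on $\mathcal E$ (Lemma~\ref{lem:anytime-conc}) and use only the elimination rule $S_{r+1}=\{a\in S_r:\mathrm{UCB}_a\ge \mathrm{LCB}_{b_r}\}$. By construction every surviving arm in phase $r$ has exactly $t_r=2^r-1$ counted samples. On $\mathcal E$ this gives, for every $a\in S_r$,
\[
\mu_a-\beta(t_r)\le \widehat\mu_a(t_r)\le \mu_a+\beta(t_r).
\]
Hence
\[
\mu_a\le \mathrm{UCB}_a\le \mu_a+2\beta(t_r),\qquad \mu_a-2\beta(t_r)\le \mathrm{LCB}_a\le \mu_a .
\]
The two sandwiches above are all the analysis needs.

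\textbf{Step 1: the best arm survives.} We proceed by induction on $r$, starting from $a^\star\in S_1=\mathcal K$. Suppose $a^\star\in S_r$. Since $b_r\in S_r$,
\[
\mathrm{LCB}_{b_r}\le \mu_{b_r}\le \mu_{a^\star}\le \mathrm{UCB}_{a^\star}.
\]
Therefore $a^\star$ passes the test and $a^\star\in S_{r+1}$.

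\textbf{Step 2: suboptimal arms are eliminated.} Fix $a\ne a^\star$ and a phase $r$ with $4\beta(t_r)<\Delta_a$. If $a\notin S_r$ it was already removed and there is nothing to show, so assume $a\in S_r$. By Step 1, $a^\star\in S_r$, and since $b_r$ maximizes the LCB,
\[
\mathrm{LCB}_{b_r}\ge \mathrm{LCB}_{a^\star}\ge \mu_{a^\star}-2\beta(t_r).
\]
On the other side, $\mathrm{UCB}_a\le \mu_a+2\beta(t_r)$. Combining,
\[
\mathrm{UCB}_a-\mathrm{LCB}_{b_r}\le -\Delta_a+4\beta(t_r)<0.
\]
So $a$ fails the test and $a\notin S_{r+1}$.

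\textbf{Where care is needed.} The only delicate point is the indexing of samples. $\mathcal E$ is stated for the empirical mean of the first $t$ \emph{counted} draws of each arm. By the coupling of Lemma~\ref{lem:counted-coupling}, installation-round rewards never enter these means. Because zero-error decoding makes the agent execute $\text{plan}(S_r,m_r)$ exactly, each surviving arm has precisely $t_r$ counted samples at the end of phase $r$. This justifies evaluating the confidence bounds at $t=t_r$. The strict inequality $4\beta(t_r)<\Delta_a$ then handles ties in the elimination rule, since an arm is kept whenever $\mathrm{UCB}_a\ge \mathrm{LCB}_{b_r}$.
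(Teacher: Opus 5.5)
Your proof is correct and follows essentially the same route as the paper: the same two-sided bounds on $\mathcal E$, the chain $\mathrm{LCB}_{b_r}\le\mu_{b_r}\le\mu_{a^\star}\le\mathrm{UCB}_{a^\star}$ to keep $a^\star$, and $\mathrm{UCB}_a\le\mu_a+2\beta(t_r)$ against $\mathrm{LCB}_{b_r}\ge\mathrm{LCB}_{a^\star}\ge\mu_{a^\star}-2\beta(t_r)$ to eliminate $a$. Your explicit induction establishing $a^\star\in S_r$ makes rigorous what the paper uses implicitly, while the appeal to the coupling lemma is unnecessary, since $\widehat\mu_a(t)$ is defined directly from counted rewards.
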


\emph{Proof:}
Fix a phase $r$ and let $t_r=2^r-1$ be the counted sample size per surviving arm at the end of phase $r$.

\emph{(i) The best arm is not eliminated.}
Let $b_r\in\arg\max_{b\in S_r}\mathrm{LCB}_b$ where
\[
\mathrm{LCB}_b \triangleq \widehat\mu_b(t_r)-\beta(t_r),\qquad
\mathrm{UCB}_b \triangleq \widehat\mu_b(t_r)+\beta(t_r).
\]
On $\mathcal E$, we have $\widehat\mu_{a^\star}(t_r)\ge \mu_{a^\star}-\beta(t_r)$, hence
\[
\mathrm{UCB}_{a^\star}=\widehat\mu_{a^\star}(t_r)+\beta(t_r)\ge \mu_{a^\star}.
\]
Also on $\mathcal E$, for any $b\in S_r$ we have $\widehat\mu_b(t_r)\le \mu_b+\beta(t_r)$, so
\[
\mathrm{LCB}_{b_r}=\widehat\mu_{b_r}(t_r)-\beta(t_r)\le \mu_{b_r}\le \mu_{a^\star}.
\]
Therefore $\mathrm{UCB}_{a^\star}\ge \mu_{a^\star}\ge \mathrm{LCB}_{b_r}$, so $a^\star$ satisfies the retention
condition $\mathrm{UCB}_{a^\star}\ge \mathrm{LCB}_{b_r}$ and is not eliminated.

\emph{(ii) A suboptimal arm is eliminated once $4\beta(t_r)<\Delta_a$.}
Fix $a\neq a^\star$. On $\mathcal E$ we have
\[
\mathrm{UCB}_a
=\widehat\mu_a(t_r)+\beta(t_r)
\le (\mu_a+\beta(t_r))+\beta(t_r)=\mu_a+2\beta(t_r).
\]
Also, since $b_r$ maximizes $\mathrm{LCB}$ over $S_r$ and $a^\star\in S_r$ by part (i),
\[
\mathrm{LCB}_{b_r}\ge \mathrm{LCB}_{a^\star}
= \widehat\mu_{a^\star}(t_r)-\beta(t_r)\ge (\mu_{a^\star}-\beta(t_r))-\beta(t_r)=\mu_{a^\star}-2\beta(t_r).
\]
Combining the two displays,
\[
\mathrm{UCB}_a \le \mu_a+2\beta(t_r)=\mu_{a^\star}-\Delta_a+2\beta(t_r).
\]
If $4\beta(t_r)<\Delta_a$, then $-\Delta_a+2\beta(t_r)<-2\beta(t_r)$, hence
\[
\mathrm{UCB}_a < \mu_{a^\star}-2\beta(t_r)\le \mathrm{LCB}_{b_r}.
\]
Therefore $a$ fails the retention test $\mathrm{UCB}_a\ge \mathrm{LCB}_{b_r}$ and is eliminated, i.e.,
$a\notin S_{r+1}$.

\subsection{Step 4: Physical stopping time decomposition and conclusion}

We now prove Theorem~\ref{thm:PSE}.

\emph{Proof:}
Assume plan packets are decoded with zero error (as in the theorem statement). Let $R$ be the number of executed phases, i.e., the number of iterations of the while-loop in Algorithm~\ref{alg:PSE}.
Define the number of \emph{counted} pulls made by PSE as
\[
N_{\rm counted}\triangleq \sum_{r=1}^{R} |S_r|\,m_r.
\]
In each executed phase $r$, PSE spends exactly $n_r$ physical rounds to transmit the plan packet (during which it pulls
the hold arm and discards these rewards for estimation), and then spends exactly $|S_r|m_r$ physical rounds executing
the plan (counting those rewards). Therefore the total number of physical rounds is exactly
\begin{equation}
\label{eq:tau-decomposition}
\tau = \sum_{r=1}^{R} \big(n_r+|S_r|m_r\big)=N_{\rm counted}+\sum_{r=1}^{R} n_r.
\end{equation}

Next, by Lemma~\ref{lem:counted-coupling}, the counted pull/reward sequence of PSE is (under a coupling) identical to
that of the clean phased-SE algorithm with the same $(m_r)$ and elimination rule. In particular, PSE and clean phased-SE
stop after the same number of counted pulls and output the same arm as a function of the counted data. Let
$\tau_{\rm SE}$ denote the (random) stopping time (number of pulls) of this clean phased-SE algorithm. Then
\[
N_{\rm counted}=\tau_{\rm SE}\qquad\text{almost surely under the coupling},
\]
hence, in distribution as well.

By Lemma~\ref{lem:anytime-conc}, $\Pr(\mathcal E)\ge 1-\delta$. On $\mathcal E$, Lemma~\ref{lem:elim-sound} implies that
$a^\star$ is never eliminated and every suboptimal arm is eventually eliminated, so both algorithms terminate and output
$a^\star$ on $\mathcal E$. Therefore PSE is $\delta$-correct.

Finally, standard analyses of phased successive elimination imply that with probability at least $1-\delta$,
\[
\tau_{\rm SE} \le N_{\rm SE}(\delta,\mu),
\]
where $N_{\rm SE}(\delta,\mu)$ is the usual instance-dependent clean sample complexity bound (one such bound is stated
in Theorem~\ref{thm:PSE}). Combining with \eqref{eq:tau-decomposition} yields, with probability at least $1-\delta$,
\[
\tau = N_{\rm counted}+\sum_{r=1}^{R}n_r = \tau_{\rm SE}+\sum_{r=1}^{R} n_r
\le N_{\rm SE}(\delta,\mu)+\sum_{r=1}^{R} n_r,
\]
which concludes our proof.

\section{Missing Proof Sketches}
This section contains the proof (sketches) we removed from the main body of the paper.

\subsection{Proof Sketch of Lemma \ref{lem:baseline-nonident}}
If $W\mu=W\mu'$, then for every command $i$ the conditional reward law given $X_t=i$ is identical under $\mu$
and $\mu'$. Hence the entire observation process has the same distribution under both instances, so any
decision rule must behave identically and cannot be correct on both.

\subsection{Proof of Lemma \ref{lem:indset-decoding}}
By definition of the confusability graph, $\{x,x'\}\in E$ iff
$\mathcal{Y}(x)\cap\mathcal{Y}(x')\neq\emptyset$.
Independence of $\mathcal S$ means that for any distinct $x,x'\in\mathcal S$ we have
$\mathcal{Y}(x)\cap\mathcal{Y}(x')=\emptyset$. Therefore, any output $y$ can come from
\emph{at most one} $x\in\mathcal S$, and the decoder can recover $X$ as the unique
element of $\{x\in\mathcal S:W(y\mid x)>0\}$.

\subsection{Proof Sketch of Lemma \ref{lem:oneuse-indset}}
For the typewriter channel, $y\in\{x,x+1\}$. Since $\mathcal S$ is independent, $x+1\notin\mathcal S$.
Thus, the decoder rule
$\hat x(y\mid\mathcal S)=y$ if $y\in\mathcal S$ and $\hat x(y\mid\mathcal S)=y-1\pmod K$ otherwise
returns $x$ in both cases.

\fi 

\end{document}